\theoremstyle{plain}
\newtheorem{proposition}[theorem]{Proposition}
\newcommand{\N}{\mathbb{N}\xspace}
\newcommand{\bigO}{\mathcal{O}\xspace}
\newcommand{\ceil}[1]{\left \lceil #1 \right \rceil}
\newcommand{\alg}{\textsc{Alg}\xspace}
\newcommand{\opt}{\textsc{Opt}\xspace}
\newcommand{\algDivEx}{\textsc{Divide \& Explore}\xspace}
\newcommand{\agentset}{\mathcal{A}\xspace}
\newcommand{\agent}{A\xspace}
\title{Maximal Exploration of Trees with Energy-Constrained Agents
  \footnote{This work was partially supported by the ANR project ANCOR
    (\textsc{anr-14-ce36-0002-01}) and the DFG Priority Programme 1736
    ``Algorithms for Big Data''}}
\author[1]{Evangelos~Bampas}
\author[1]{J\'{e}r\'{e}mie~Chalopin}
\author[1]{Shantanu~Das}
\author[2]{Jan~Hackfeld}
\author[1]{Christina~Karousatou}
\affil[1]{LIS, Aix-Marseille Universit\'{e} \& CNRS, France,\\
  \texttt{\{evangelos.bampas,
    shantanu.das, jeremie.chalopin, christina.karousatou\}@lis-lab.fr}}
\affil[2]{School of Business and Economics, Humboldt University
  Berlin, Germany,\\ \texttt{jan.hackfeld@hu-berlin.de}}
\authorrunning{E.~Bampas, J.~Chalopin, S.~Das, J.~Hackfeld, and C.~Karousatou}
\begin{document}

\maketitle

\begin{abstract}
We consider the problem of exploring an unknown tree with a team of~$k$ initially colocated mobile agents. Each agent has limited energy and cannot, as a 
result, traverse more than~$B$ edges. The goal is to maximize the number of nodes collectively visited by all agents during the execution. Initially, 
the agents have no knowledge about the structure of the tree, but they gradually discover the topology as they traverse new edges. We assume that the 
agents can communicate with each other at arbitrary distances. Therefore the knowledge obtained by one agent after traversing an edge is 
instantaneously transmitted to the other agents. We propose an algorithm that
divides the tree into subtrees during the exploration process and
 makes a careful trade-off between breadth-first and depth-first exploration. We show that our algorithm is 3-competitive  compared to an optimal solution that we could obtain if we knew the map of the tree in advance. While it is easy to see that no algorithm can be better than 2-competitive, we give a non-trivial lower bound of $2.17$ on the competitive ratio of any online algorithm.
\keywords{graph exploration, mobile agents, online algorithm} 
\end{abstract}

\section{Introduction} \label{sec:intro}

The problem of exploration of an unknown graph by one or more agents  is a well known problem with many 
applications ranging from searching the internet to physical exploration of unknown terrains using mobile sensor robots. 
Most results on exploration 
algorithms focus on minimizing the exploration time or memory requirements for the agents. For a brief survey of such results see~\cite{Das2013}.

We study the exploration problem under the very natural constraint that
agents have limited energy resources and movement consumes energy.
We model this constraint by bounding the number of edges that an agent 
can traverse by an integer~$B$ (henceforth called the \emph{energy budget} of the agent).  
 A similar restriction was considered in the \emph{piecemeal 
exploration} problem~\cite{DBLP:journals/iandc/AwerbuchBRS99}, where the agent could refuel by going back to its starting location. Thus, the 
exploration could be performed by 
a 
single agent using a sequence of tours starting and ending at the root vertex. On the other hand, \cite{DBLP:conf/arcs/DyniaKS06} 
and~\cite{DBLP:conf/sirocco/0001DK15} studied 
exploration without refueling, using multiple agents with the objective of minimizing the energy budget per agent, or the number of agents needed 
for a 
fixed budget. 

In this paper, we drop the requirement that the graph needs to be completely explored by the agents and instead focus on exploring the maximum number of nodes with a fixed given number $k$ of initially colocated agents with fixed energy budgets~$B$.
The tree is initially unknown to the agents and 
its topology is gradually discovered as the agents visit new nodes. We assume that the agents can communicate globally, i.e., an updated map of the tree is transmitted instantaneously to all other agents. We measure the performance of an algorithm for this problem by the standard tool of competitive analysis, i.e., we compare
a given online algorithm to an optimal offline algorithm which has a complete map of the tree in advance.

\subparagraph{Our result.}
 
 The challenge in designing a good exploration algorithm for our problem is to balance between 
sending agents in a depth-first manner to 
avoid visiting the same set of vertices too often and 
exploring the tree in a breadth-first manner to make sure that there is no large set 
of vertices close to the root that was missed by the online algorithm. Our algorithm achieves this by
maintaining a set of edge-disjoint subtrees of the part of the tree that is already explored and by iteratively sending an agent from the root to the 
subtree with the highest root. We prove that the algorithm is 3-competitive, i.e., an optimal offline algorithm which knows the tree in advance can 
explore at most three times as many vertices as our algorithm. 
We also show that the analysis is tight by giving a sequence of instances showing that the algorithm is not better than 3-competitive. 

We complement this positive result by showing that
no online algorithm can be better than $2.17$-competitive. 
The proof of this general lower bound is based on an adaptive adversary that
forces the online algorithm to spend a lot of energy if it completely wants to explore certain subtrees while preventing it from discovering some vertices close to the root.

\subparagraph{Further related work.}

Graph exploration has received much attention in the literature as it is a basic subtask for solving other more complex problems. For the problem of graph exploration by a single agent a very common optimization objective is to minimize the exploration time or equivalently the number of edges traversed. Depth First Search is a very simple exploration algorithm that requires at most $2m$ steps for exploring a graph with $m$ edges. 
In \cite{PanPe99}, Panaite and Pelc improved this result by presenting an algorithm that requires $m +3n$ steps for exploring a graph of $n$ nodes and $m$ edges. Graph exploration has been studied for the case of weighted graphs \cite{MEGOW201262}, as well as for the much harder case of directed graphs \cite{AlbH00,Deng,Fleischer2005,Wattenhofer}. The objective of minimizing the memory has been investigated in \cite{Lgmem,Disser:2016,FraIPPP05}.

For the case of multiple agents Fraigniaud et al. in \cite{FraGKP06} showed that the problem of minimizing the exploration time of $k$ collaborating agents is NP-hard even for tree topologies. The authors also proposed an  $\bigO(k/ \log k)$-competitive algorithm for collaborative graph exploration and gave a lower bound on the competitive ratio of $\Omega(2 - 1/k)$. This lower bound was later improved to $\Omega(\log k / \log \log k)$ in~\cite{Dynia2007}.
The lower and upper bounds for the problem of collaborative exploration have been further investigated in \cite{Disser15,Disser17,Ortolf2014}. 

Energy aware graph exploration was first considered by Betke et al. in \cite{Betke95}. The authors studied exploration of grid graphs by an agent who can return to its starting node $s$ for refueling (piecemeal exploration). The agent is given an upper bound $(2 +\alpha)r$ on the number of edge traversals it can make before returning to $s$, where $\alpha$ is some positive constant and $r$ is the distance to the furthest node from $s$. They presented an  $\bigO(m)$ algorithm for exploration of grid graphs with rectangular obstacles. In \cite{DBLP:journals/iandc/AwerbuchBRS99} an algorithm  for piecemeal exploration of general graphs was proposed requiring a nearly linear number of edge traversals. Finally, an optimal algorithm for piecemeal exploration of weighted graphs requiring only $\Theta(m)$ edge traversals was presented in \cite{Duncan01}. In \cite{arxiv17} the authors studied the exploration of weighted trees and showed that in this case, the decomposition of any DFS traversal of a weighted tree into a sequence of closed routes of length at most $B$ provides a constant-competitive solution with respect to the number of routes as well as the total energy consumption. 

When refueling is not allowed, multiple agents may be needed to explore even graphs of restricted diameter. Dynia et al. \cite{DBLP:conf/arcs/DyniaKS06} studied collaborative exploration with return for the case where the number of agents is fixed and the goal is to minimize the amount of energy $B$ required by each agent. They presented an $8$-competitive algorithm for trees and showed a lower bound of $1.5$ on the competitive ratio for any deterministic algorithm. The upper bound was later improved to $4 -2 /k$ in \cite{Dynia2007}. The authors in \cite{DBLP:conf/sirocco/0001DK15} considered tree exploration with no return for the case where the amount of available energy to the agents is fixed and the goal is to minimize the number of agents used. They presented an algorithm with a competitive ratio of $\bigO(\log B)$ for the case that the agents need to meet in order to communicate and showed that this is best possible.
 
\section{Terminology and Model}

We consider a set $\agentset$ of $k$ distinct agents initially located at the root~$r$ of an undirected, initially unknown tree~$T$. The edges at every 
vertex~$v$ in~$T$ have locally distinct edge labels $0,\ldots, \delta_v-1$, where $\delta_v$ is the degree of $v$. These edge labels are referred to as the \emph{local port numbers} at~$v$. We assume, without loss of generality, that the local port number of the edge leading back to the root~$r$ is~$0$ for any 
vertex~$v\neq r$ in~$T$. 
Otherwise, every agent internally swaps the labels of the edge leading back to the root and the label $0$ for every vertex $v\neq r$.

For any vertex~$v$ in~$T$, we let $d(v)$ be the depth of $v$ in $T$. The induced subtree with root~$v$ containing $v$ and all vertices below $v$ in $T$ is further denoted by $T(v)$. For a subtree $S$ of $T$, we write $r_S$ to denote the root of $S$, i.e., the unique vertex contained in $S$ having the smallest depth in $T$. Moreover, $|S|$ denotes the number of vertices in~$S$.

The tree is initially unknown to the agents, but they learn the map of the tree as they traverse new edges. Each time an agent arrives at a new vertex, 
it learns the local port number of the edge through which it arrived, as well as the degree of the vertex. We assume that agents can communicate at 
arbitrary distances, so the updated map of the tree, including all agent 
positions, is instantaneously available to all agents (global communication). Each agent has limited energy~$B$ and it consumes one unit of 
energy for every edge that it traverses.

The goal is to design an algorithm~\alg that maximizes the total number of distinct vertices visited by the agents. For a given instance $I=\left\langle T,r,k,B\right\rangle$, where $T$ is a tree, $r$ is the starting vertex of the agents, $k$ is the number of agents, and $B$ is the energy budget of each agent, 
let $|\alg(I)|$ denote the total number of distinct vertices visited by the agents using algorithm~\alg on the instance~$I$. We measure the performance of an algorithm~$\alg$ by the 
competitive ratio $\rho_{\alg} = \sup_I \frac{|\opt(I)|}{|\alg(I)|}$, where $|\opt(I)|$ is the maximum number of distinct vertices of $T$ that can be explored by the agents using an optimal offline algorithm \opt, i.e., an algorithm with full initial knowledge of the instance~$I$. When the considered instance~$I$ is clear from the context, we simply write $|\alg|$ instead of $|\alg(I)|$ and $|\opt|$ instead of~$|\opt(I)|$. 

\section{An Algorithm for Maximal Tree Exploration}

Let us assume that we do a depth-first search of the whole tree $T$ and always choose the smallest label $l>0$ to
an unexplored vertex. We call this algorithm L-DFS. We further denote the sequence $(r,v_1), (v_1,v_2) \ldots,(v_m,r)$ of directed edges obtained by directing every undirected edge of $T$ that the agent traversed in the direction in which the agent traversed the edge in the L-DFS traversal 
 the \emph{L-DFS sequence} of $T$. Note that every undirected edge $\{v,w\}$ of the tree $T$ appears as $(v,w)$ and $(w,v)$ in this sequence. 
 Similarly, we call a depth-first search of $T$ that always chooses the largest label $l>0$ to an unexplored vertex an R-DFS and the corresponding sequence of directed edges an \emph{R-DFS sequence}. 
 Note that the R-DFS sequence of the edges in $T$ is obtained by reversing the order of edges of the L-DFS sequence and changing every edge $(v,w)$ to $(w,v)$. A concrete implementation of both the algorithm L-DFS and the algorithm R-DFS is given in Appendix~\ref{app:LR-DFS}. 
 
 We call a consecutive subsequence of an L-DFS or R-DFS sequence a \emph{substring}.
For an induced subtree $T(v)$ of $T$, the L-DFS sequence of $T(v)$ is simply a substring of the L-DFS sequence of $T$. For a subtree $S$ we define the \emph{leftmost} unexplored vertex as the unexplored vertex in $S$ which is incident to the first edge in the L-DFS sequence of $S$ leading to an unexplored vertex and the \emph{rightmost} unexplored vertex as the unexplored vertex in $S$ which is incident to the first edge in the R-DFS sequence of $S$ leading to an unexplored vertex.

We further say that an agent~$\agent$ performing an L-DFS \emph{covers} at least $s$~edges $(v_1,v_2), \ldots, (v_s,v_{s+1})$ of the L-DFS sequence of~$T$, if $\agent$ consecutively visits $v_1, v_2,\ldots, v_s,v_{s+1}$ in this order  and the sequence $(v_1,v_2),\ldots, (v_s,v_{s+1})$ is a substring of the L-DFS sequence of $T$. 
Similarly, we say that an agent~$\agent$ performing an R-DFS \emph{covers} at least $s$ edges $(v_1,v_2),\ldots, (v_s,v_{s+1})$ of the L-DFS sequence of $T$, if $\agent$ consecutively visits $v_{s+1}, v_s,\ldots, v_2,v_1$ in this order and the sequence $(v_1,v_2),\ldots, (v_s,v_{s+1})$ is a substring of the L-DFS sequence of $T$. Note that two agents $A_1$ and $A_2$ may traverse the same edge in the same direction, but still cover two distinct sets of directed edges of the L-DFS sequence, if one agent performs an L-DFS and the other agent an R-DFS.

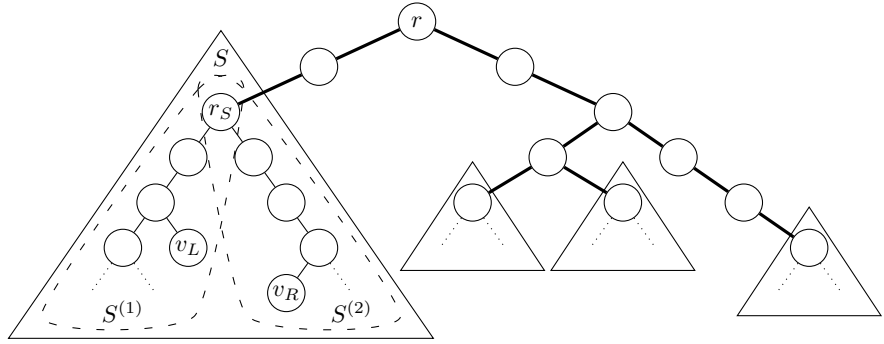
\begin{figure}
\begin{center}
\usetikzlibrary{arrows,intersections}
\tikzstyle{gnode}=[circle,draw,minimum size=2em,scale=0.7]
\tikzstyle{lnode}=[scale=0.9]
\tikzstyle{bedge}=[very thick]
\newcommand*{\xf}{0.43}
\newcommand*{\yf}{0.6}
\begin{tikzpicture}[scale=1]
\node[gnode] (r) at (0,0) {};
\node[lnode] (l0) at (0,0)  {$r$};	

\node[gnode] (v1) at (-3*\xf,-\yf) {};
\node[gnode] (v2) at (-6*\xf,-2*\yf) {};
\node[lnode] (l0) at (-6*\xf,-2*\yf)  {$r_S$};	
\node[gnode] (v3) at (-7*\xf,-3*\yf) {};
\node[gnode] (v4) at (-5*\xf,-3*\yf) {};
\node[gnode] (v5) at (-8*\xf,-4*\yf) {};
\node[gnode] (v6) at (-9*\xf,-5*\yf) {};
\node[gnode] (v7) at (-7*\xf,-5*\yf) {};
\node[lnode] (l7) at (-7*\xf,-5*\yf) {$v_L$};
\node[gnode] (v8) at (-4*\xf,-4*\yf) {};
\node[gnode] (v9) at (-3*\xf,-5*\yf) {};
\node[gnode] (v10) at (-4*\xf,-6*\yf) {};
\node[lnode] (l10) at (-4*\xf,-6*\yf) {$v_R$};

\draw[bedge](r)--(v1);
\draw[bedge](v1)--(v2);
\draw(v2)--(v3);
\draw(v3)--(v5);
\draw(v5)--(v6);
\draw(v5)--(v7);
\draw[dotted](v6)-- ++ (-\xf,-\yf);
\draw[dotted](v6)--++ (\xf,-\yf);

\draw(v2)--(v4);
\draw(v4)--(v8);
\draw(v8)--(v9);
\draw(v9)--(v10);
\draw[dotted](v9)--++ (\xf,-\yf);

\node[gnode] (w1) at (3*\xf,-\yf) {};
\node[gnode] (w2) at (6*\xf,-2*\yf) {};
\node[gnode] (w3) at (4*\xf,-3*\yf) {};
\node[gnode] (w4) at (8*\xf,-3*\yf) {};
\node[gnode] (w5) at (1.7*\xf,-4*\yf) {};
\node[gnode] (w6) at (6.3*\xf,-4*\yf) {};
\node[gnode] (w8) at (10*\xf,-4*\yf) {};
\node[gnode] (w9) at (12*\xf,-5*\yf) {};

\draw[dotted](w5)--++ (-\xf,-\yf);
\draw[dotted](w5)--++ (\xf,-\yf);
\draw[dotted](w6)--++ (-\xf,-\yf);
\draw[dotted](w6)--++ (\xf,-\yf);
\draw[dotted](w9)--++ (-\xf,-\yf);
\draw[dotted](w9)--++ (\xf,-\yf);

\draw[bedge](r)--(w1);
\draw[bedge](w1)--(w2);
\draw[bedge](w2)--(w3);
\draw[bedge](w2)--(w4);
\draw[bedge](w3)--(w5);
\draw[bedge](w3)--(w6);
\draw[bedge](w4)--(w8);
\draw[bedge](w8)--(w9);

\draw [solid] plot [smooth cycle,tension=0.0] coordinates { (-6*\xf,-0.2*\yf) (-12.5*\xf,-7*\yf) (0.5*\xf,-7*\yf)};
\draw [loosely dashed] plot [smooth cycle,tension=0.3] coordinates { (-5.3*\xf,-1.5*\yf) (-6.7*\xf,-1.5*\yf)  (-5.2*\xf,-6.5*\yf)(-0.5*\xf,-6.5*\yf)};
\draw [loosely dashed] plot [smooth cycle,tension=0.3] coordinates { (-5.3*\xf,-1.5*\yf) (-6.7*\xf,-1.5*\yf)  (-11.5*\xf,-6.5*\yf)(-6.8*\xf,-6.5*\yf)};
\node[lnode] (l0) at (-6*\xf,-0.8*\yf){$S$};	
\node[lnode] (l0) at (-9*\xf,-6.4*\yf)  {$S^{(1)}$};	
\node[lnode] (l0) at (-2*\xf,-6.4*\yf)  {$S^{(2)}$};	

\draw [solid] plot [smooth cycle,tension=0.0] coordinates { (1.7*\xf,-3.1*\yf) (-0.5*\xf,-5.5*\yf) (3.9*\xf,-5.5*\yf)};
\draw [solid] plot [smooth cycle,tension=0.0] coordinates { (6.3*\xf,-3.1*\yf) (4.1*\xf,-5.5*\yf) (8.5*\xf,-5.5*\yf)};
\draw [solid] plot [smooth cycle,tension=0.0] coordinates { (12*\xf,-4.1*\yf) (9.8*\xf,-6.5*\yf) (14.2*\xf,-6.5*\yf)};
\end{tikzpicture}
\caption{Example in which algorithm \algDivEx in iteration $t$ divides the considered subtree $S$ into two subtrees $S^{(1)}$ and $S^{(2)}$. The tree $T^R_t$ that connects the roots of the subtrees in $\mathcal{T}_t$ is the subtree containing all thick edges.}
\label{fig-example-algorithm}
\end{center}
\end{figure}

With these definitions, we are now ready to explain the idea of the algorithm~\algDivEx:
During the run of the algorithm, we maintain a set $\mathcal{T}$ of edge-disjoint subtrees of $T$, initially just containing $T$. An example is shown in Fig.~\ref{fig-example-algorithm}, where the triangles show the subtrees that are currently contained in the set $\mathcal{T}$. In every iteration, we consider a subtree $S$ which contains an unexplored vertex and has the highest root, i.e., minimizes $d(r_S)$. 
As long as the leftmost unexplored vertex $v_L$ in $S$ is not too far away from $r_S$, i.e., $d(v_L)-d(r_S)$ is sufficiently small, we send an 
agent to $v_L$ and let it continue the L-DFS from there. 
We do the same if $v_R$ is not too deep and then let the agent continue the R-DFS from $v_R$.
The intuition is that the energy spent to reach $r_S$ is unavoidable, but also the agents in the offline optimum~\opt need to spend this energy without exploring new vertices after the tree has been explored up to depth $d(r_S)$. Thus, the agent only potentially wastes energy to reach $v_L$ (or $v_R$), but from then on explores many new vertices because an agent doing $2m$ edge traversals on a DFS visits at least $m$ distinct vertices. 
If both $v_L$ and $v_R$ are sufficiently deep, we split $S$ into two edge-disjoint subtrees $S^{(1)}$ and $S^{(2)}$, as shown in Fig.~\ref{fig-example-algorithm}. In this case both $S^{(1)}$ and $S^{(2)}$ contain a sufficiently long part of the L-DFS sequence, which has not been covered by any agent. This is important because we want to avoid that an agent is sent to a new subtree which only needs little more exploration. A complete description of~\algDivEx is given in Algorithm~\ref{alg-div-expl}.

\begin{algorithm}[h]
\KwIn{tree $T$ with root~$r$, set of agents $\agentset$, energy bound $B$}
{
$\mathcal{T}=\{T\}$ \;
L-DFS($T,r$) \;
R-DFS($T,r$) \;
\While{$T$ contains unexplored vertex and $\exists$ agent at $r$} {
	\tcp*[l]{move down the roots of the subtrees in $\mathcal{T}$ if possible}
	\ForAll{$S \in \mathcal{T}$ containing an unexplored vertex}{
		$r_0:=r_S$	\;
		\While{$r_0$ only has one child $v$ leading to an unexplored vertex  \\ 
					\hspace*{1cm}\textbf{and} $r_0$ has no unexplored child}{
			$r_0:=v$ \;
		}	
		$\mathcal{T}:=(\mathcal{T}\setminus\{S\})\cup \{T(r_0)\}$ \\
	}
	\tcp*[l]{explore or split the subtree with the highest root}
	$S:=$ subtree in $\mathcal{T}$ that contains an unexplored vertex and minimizes $d(r_{S})$ \;
 	$v_L := $ leftmost unexplored vertex in $S$ \;
 	$v_R := $ rightmost unexplored vertex in $S$  \;
	\uIf{$d(v_L)-d(r_S) \leq \max\{1,1/3 \cdot (B - d(r_S)) \} $}{
		L-DFS($S,v_L$) \;
	} \uElseIf{$d(v_R)-d(r_S) \leq \max\{1,  1/3 \cdot (B - d(r_S))\} $} {
		R-DFS($S,v_R$) \;
	} \Else {
		$v:=$ child of $r_S$ leading to $v_R$ \;
		$S^{(1)}:=$ induced subtree of $S$ containing all vertices not in $T(v)$ \;
		$S^{(2)}:=$ induced subtree of $S$ containing all vertices in $T(v)$ and $r_S$ \;
		$\mathcal{T}:=( \mathcal{T}\setminus \{S\} ) \cup \{S^{(1)},S^{(2)}\}$ \;
		R-DFS($S^{(1)},r_S$) \;
		L-DFS($S^{(2)},r_S$) \;
	}
}
}
\caption{\algDivEx}
\label{alg-div-expl}
\end{algorithm}

In the remainder of this section, we analyze Algorithm~\algDivEx in
order to show that it is $3$-competitive.  Note that the first agent
in \algDivEx simply performs a depth-first search and explores at
least $B/2$ vertices or completely explores the tree. Consequently, if
$k=1$ or if $n<B$, the algorithm is $2$-competitive, and thus we
assume in the following that $n \geq B$ and $k\geq 2$.

For the analysis of \algDivEx, we further need the following
notation. For every iteration~$t$ of the outer while-loop, we let
$k_t\in \{1,2\}$ be the number of agents used by \algDivEx in this iteration and
$k_0=2$ be the number of agents used before the first iteration of the
outer while-loop. Further, let $\mathcal{T}_t$ be the set of subtrees
$\mathcal{T}$ at the end of iteration~$t$ and let $T^R_t$ be the
unique subtree of $T$ that connects the set of roots $\{ r_S \mid S
\in \mathcal{T}_t \}$ of all subtrees with the minimum number of
edges.  Moreover, we denote the subtree $S$ with the highest root
considered by \algDivEx in iteration~$t$ by $S_t$ and its root by
$r_t$. Finally $\bar{t}$ denotes the total number of iterations of the
while-loop.

The crux of our analysis is to show that the amortized amount of
energy spent making progress on the L-DFS or R-DFS is $\tfrac{2}{3}
\cdot k_i \cdot (B-d(r_i))$ for the agents in iteration $i$, as stated
in the following lemma.

\begin{lemma}\label{lem_bound_ldfs_covered}
The algorithm \algDivEx either completely explores $T$ or all agents
used by the algorithm together cover at least
\begin{align*}
\tfrac{2}{3}(|T^R_{\overline{t}}|-1)  +  \sum_{0\leq i \leq \overline{t}} \tfrac{2}{3} \cdot k_i \cdot (B-d(r_i))
\end{align*}
 distinct edges of  the total L-DFS sequence of $T$.
\end{lemma}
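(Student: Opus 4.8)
The plan is to separate the two terms of the bound and charge them to disjoint sets of covered directed edges. First I would dispose of the dichotomy: if \algDivEx terminates having explored all of $T$, the first alternative of the lemma holds, so I assume instead that the outer while-loop stops because no agent remains at $r$ while $T$ still contains unexplored vertices. In this regime I fix the final backbone $T^R_{\bar{t}}$ and split the covered directed edges of the L-DFS sequence into those lying on $T^R_{\bar{t}}$ and those lying strictly below it inside the subtrees of $\mathcal{T}_{\bar{t}}$. Since every root $r_S$ with $S \in \mathcal{T}_{\bar{t}}$ is physically reached by some agent, the whole $r$-to-$r_S$ path is traversed, so each of the $|T^R_{\bar{t}}|-1$ undirected backbone edges is covered in at least one direction; this already yields $|T^R_{\bar{t}}|-1 \ge \tfrac{2}{3}(|T^R_{\bar{t}}|-1)$ covered backbone edges, and the deliberate slack in the constant $\tfrac{2}{3}$ is what later absorbs the boundary effects. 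It then remains to produce at least $\sum_{0\le i\le\bar{t}} \tfrac{2}{3} k_i (B-d(r_i))$ below-backbone edges.

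For the below-backbone term I would argue iteration by iteration that the agents launched in iteration $i$ contribute $\tfrac{2}{3} k_i (B-d(r_i))$ fresh edges strictly below $r_i = r_{S_i}$. In a non-split iteration a single agent ($k_i=1$) is sent to the frontier vertex $v_L$ or $v_R$, and the guard of the corresponding branch forces $d(v_L)-d(r_i) \le \max\{1,\tfrac{1}{3}(B-d(r_i))\}$. Hence the agent reaches the frontier after spending at most $d(r_i)+\tfrac{1}{3}(B-d(r_i))$ energy and then genuinely continues the (L- or R-)DFS: each of its remaining $\ge \tfrac{2}{3}(B-d(r_i))$ steps traverses the next directed edge of the L-DFS sequence that no agent has covered before, and all of these lie inside $S_i$ below $r_i$, hence off the backbone. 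This is exactly the intended contribution. The base case $i=0$ is supplied by the two initial traversals from $r$, which cover $2B \ge \tfrac{4}{3}B$ edges.

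The step I expect to be the main obstacle is the split iteration, where two agents ($k_i=2$) run R-DFS$(S^{(1)}, r_i)$ and L-DFS$(S^{(2)}, r_i)$ and I must produce $\tfrac{4}{3}(B-d(r_i))$ fresh below-backbone edges. The difficulty is that these agents start at $r_i$ rather than at a frontier, and their targets --- the rightmost unexplored vertex of $S^{(1)}$ and the leftmost unexplored vertex of $S^{(2)}$ --- can be strictly deeper than $d(r_i)+\tfrac{1}{3}(B-d(r_i))$, so a per-agent budget argument gives less than $\tfrac{2}{3}(B-d(r_i))$ each. My plan is to exploit two structural facts. First, because the preceding move-down step did not advance $r_i$ and the split guard forces both $v_L$ and $v_R$ to lie at depth at least $d(r_i)+2$, the root $r_i$ must have at least two children leading to unexplored vertices; hence, after detaching the rightmost such branch $T(v)=S^{(2)}$, both $S^{(1)}$ and $S^{(2)}$ still contain long uncovered stretches of the L-DFS sequence, which is precisely the property emphasized just before the algorithm. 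Second, any energy an agent spends descending from $r_i$ to its frontier is spent on edges that become part of $T^R_{\bar{t}}$ once the roots of $S^{(1)}$ and $S^{(2)}$ are pushed down in later iterations, so this overshoot is already paid for by the backbone term and must not be demanded again from the below-backbone term. The crux is to combine these two facts with the choice of \emph{opposite} DFS directions for $S^{(1)}$ and $S^{(2)}$ to show that the fresh below-backbone coverage of the two agents, together with the backbone growth they induce, reaches $\tfrac{4}{3}(B-d(r_i))$ without any edge being counted against both terms; this is where the constant $\tfrac{1}{3}$ in the guard is genuinely used, and controlling exactly where each DFS first meets the uncovered region is the delicate part.

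Finally I would treat the degenerate regime $B-d(r_i)\le 2$, where the guard reduces to $\max\{1,\cdot\}=1$ and a single productive step can fall a fraction of an edge short of $\tfrac{2}{3}(B-d(r_i))$; here the missing amount is recovered from the slack left in the backbone term together with the integrality of the edge counts. Summing the disjoint backbone and below-backbone contributions over all iterations $0\le i\le\bar{t}$ then yields the claimed bound.
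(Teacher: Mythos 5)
Your overall plan---charge the backbone term to covered backbone edges and the per-iteration term to fresh DFS edges covered in that same iteration---breaks on a case you never address: an agent that \emph{completely explores} the subtree it was sent to. In a non-split iteration $i$, the guard only guarantees that $v_L$ (or $v_R$) is close to $r_i$; it does not guarantee that $S_i$ still contains $\tfrac{2}{3}(B-d(r_i))$ uncovered edges. If $S_i$ has a single unexplored vertex at distance $1$ from $r_i$, the agent covers only one or two fresh directed edges and then stops, so your claimed contribution of $\tfrac{2}{3}(B-d(r_i))$ for that iteration is false, and neither the slack $\tfrac{1}{3}(|T^R_{\overline{t}}|-1)$ in the backbone term nor integrality can absorb a deficit of order $B$. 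The same failure occurs in the split case when one or both of $S^{(1)},S^{(2)}$ are finished: the combined fresh coverage can be as low as $\tfrac{1}{3}(B-d(r_t))+\tfrac{1}{3}(B-d(r_t))$ against a demand of $\tfrac{4}{3}(B-d(r_t))$. The paper's proof is organized around exactly this difficulty: the induction hypothesis is not the lemma itself but the stronger invariant \eqref{lb-edges-covered-upto-iteration-t}, which carries an additional potential term $\sum_{S\in\mathcal{U}_t}\tfrac{2}{3}(B-d(r_S))$, one credit per subtree that still contains an unexplored vertex. An agent that does not finish its subtree makes progress at least $\tfrac{2}{3}(B-d(r_t))$ (even $B-d(r_t)$ in the split case) and the credit stays in place; an agent that does finish its subtree may make almost no progress, but then the credit of that subtree is released and pays for the iteration; a split creates two new credit obligations, financed by the extra progress of the non-finishing agents. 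Without some such amortization, the per-iteration charging you propose is simply unavailable.

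A second, independent problem is that your two edge sets are not disjoint. ``Strictly below $r_i$'' does not imply ``off the backbone'': after a split at $r_t$, the roots of $S^{(1)}$ and $S^{(2)}$ are pushed down in later iterations, so $T^R_{\overline{t}}$ contains paths below $r_t$ whose downward directed edges were covered precisely by the DFS progress of agents working inside those subtrees. Hence directed edges you count in the backbone term can coincide with directed edges you already charged as fresh below-backbone coverage, and the two contributions cannot just be added. The paper never matches its backbone term to physically covered backbone edges at all: that term grows only through the exchange identity \eqref{equality-tr-ut}, which converts subtree credit into backbone weight at the same rate $\tfrac{2}{3}$ per edge whenever roots move down, so no covered edge is ever counted twice. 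You would need to replace your geometric decomposition by such a bookkeeping identity, or prove a genuinely new disjointness statement, before the two terms of the bound can be summed.
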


\begin{proof}
Let us assume that  \algDivEx does not completely explore $T$ and let
 $\mathcal{U}_t $ be the subset of $\mathcal{T}_t$ containing all subtrees with an unexplored vertex. We will show by induction over $t$ that 
 all agents used by \algDivEx up to the end of iteration~$t$ together cover at least 
\begin{align}
\tfrac{2}{3}(|T^R_t|-1) + \sum_{S\in \mathcal{U}_t} \tfrac{2}{3} (B - d(r_S)) +  \sum_{0\leq i \leq t} \tfrac{2}{3} \cdot k_i \cdot (B-d(r_i))
\label{lb-edges-covered-upto-iteration-t}
\end{align}
 distinct edges of  the total L-DFS sequence of $T$. It may happen that in the last iteration~$\overline{t}$ of \algDivEx the third case occurs, but only one agent is left at the root. We will treat this special case separately at the end of the proof. First, we show the lower bound above for all $t$, for which iteration~$t$ is completed, i.e., there are enough agents for \algDivEx to finish iteration~$t$.
 
For $t=0$, we have $\mathcal{U}_0=\{T\}$ as \algDivEx does not completely explore $T$ by assumption, $k_0=2$, $r_0=r_T$, and $T^R_t$ only contains $r_T$. 
Thus the lower bound~\eqref{lb-edges-covered-upto-iteration-t} on the number of edges covered by the first two agents evaluates to $2 B$. 
The first agent used by \algDivEx performs an L-DFS and covers exactly $B$ edges of the total L-DFS sequence of $T$. The second agent performs an R-DFS starting at the root of $T$ and also covers exactly $B$ edges of the total L-DFS sequence of $T$. The edges covered by the second agent are distinct from the edges covered by the first because $T$ is not completely explored by the algorithm by assumption. Hence, the lower bound~\eqref{lb-edges-covered-upto-iteration-t} holds for $t=0$.

Now, assume that the lower bound~\eqref{lb-edges-covered-upto-iteration-t} holds for $t-1$. We will show it for iteration~$t$. 
Let $\mathcal{U}'_{t-1}$ be the set of subtrees $\mathcal{U}_{t-1}$ after the for-all loop in iteration~$t$ terminated and possibly some roots of the trees in $\mathcal{U}_{t-1}$ were moved down. We claim that
\begin{align}
\tfrac{2}{3}(|T^R_{t-1}|-1) + \sum_{S \in \mathcal{U}_{t-1}}  \tfrac{2}{3} (B - d(r_S)) = \tfrac{2}{3}(|T^R_{t}|-1) + \sum_{S \in \mathcal{U}'_{t-1}} \tfrac{2}{3} (B - d(r_S)).
\label{equality-tr-ut}
\end{align}
For any subtree $S \in \mathcal{U}_{t-1}$, let $S'\in \mathcal{U}'_{t-1}$ be the corresponding subtree after the root of $S$ was possibly moved down. The tree $T^R_{t}$ contains all vertices of the tree $T^R_{t-1}$ plus the path from $r_S$ to $r_{S'}$, i.e.,  $d(r_S)-d(r_{S'})$ additional vertices, for all  $S \in \mathcal{U}_{t-1}$. This already implies~\eqref{equality-tr-ut}.

Applying~\eqref{equality-tr-ut} on the lower bound~\eqref{lb-edges-covered-upto-iteration-t} for $t-1$ yields that the number of edges  of the total L-DFS sequence of $T$ covered by the agents up to \mbox{iteration~$t-1$} is at least
\begin{align}
\tfrac{2}{3}(|T^R_{t}|-1) + \sum_{S \in \mathcal{U}'_{t-1}} \tfrac{2}{3} (B - d(r_S)) +  \sum_{0\leq i \leq t-1} \tfrac{2}{3} \cdot k_i \cdot (B-d(r_i)). \label{num-edges-covered-t-1-version2}
\end{align}

Let now $S_t$ be the subtree with root~$r_t$ considered by the algorithm in iteration~$t$ as defined above and $v_L$, $v_R$ be defined as in the algorithm. 

First, assume that we have $d(v_L)-d(r_t) \leq \max\{1, 1/3 \cdot (B -
d(r_t))\}$ and let $A_0$ be the only agent used by the algorithm in
iteration~$t$. Note that if $1/3 \cdot( B-d(r_t)) < 1$, then once it has
reached $r_t$, agent $A_0$ has either one or two energy left. In the
first case, $A_0$ only explores $v_L$ and makes a progress of $1$ on
the total L-DFS sequence. In the second case, $A_0$ makes a progress
of $2$ on the total L-DFS sequence: it goes to $v_L$ and then either it
visits a child of $v_L$, or it goes back to $r_t$. Consequently, if
$1/3 \cdot (B-d(r_t)) < 1 = d(v_L)-d(r_t)$, $A_0$ makes a progress of at
least $(B-d(r_t)) \geq 2/3\cdot (B - d(r_t))$ on the total L-DFS
sequence.

Suppose now that $1 \leq d(v_L)-d(r_t) \leq 1/3 \cdot (B -
d(r_t))$.  Agent~$A_0$ moves to $v_L$ using at most $1/3 \cdot
(B-d(r_t))$ energy and then performs an L-DFS. If $A_0$ does not
completely explore $S_t$, then the set of edges traversed by $A_0$
starting in $v_L$ and directed in the direction the edge is traversed
by $A_0$ has not been covered by any other agent. Therefore $A_0$
makes a progress of at least $2/3 \cdot (B-d(r_t))$ edges on the total
L-DFS sequence.  Adding this progress of agent~$A_0$ to the lower
bound in~\eqref{num-edges-covered-t-1-version2} on the number of edges
covered by the agents in the first $t-1$~iterations and using
$\mathcal{U}_t=\mathcal{U}'_{t-1}$ yields the lower
bound~\eqref{lb-edges-covered-upto-iteration-t} for iteration~$t$.

Next assume that $A_0$ completely explores the subtree $S_t$. We then have $\mathcal{U}_t=\mathcal{U}'_{t-1} \setminus \{S_t\}$ and the lower bound~\eqref{lb-edges-covered-upto-iteration-t} for iteration~$t$ follows directly from the lower bound~\eqref{num-edges-covered-t-1-version2} even if $A_0$ explores only $v_L$ and  only covers two new directed edges of the total L-DFS sequence.

The proof when $d(v_R)-d(r_t) \leq 1/3 \cdot \max\{1, 1/3 \cdot (B -
d(r_t))\}$ is completely analogous. 

Finally, assume that the last case occurs in iteration~$t$ and $S_t$
is split into two subtrees $S^{(1)}$ and $S^{(2)}$ as defined in the
algorithm. Further, let $A_1$ and $A_2$ be the agents
used in iteration~$t$ for performing an R-DFS in $S^{(1)}$ and an
L-DFS in $S^{(2)}$, respectively.

We first show that $v_L$ and $v_R$ are below different children of $r_t$.
Note that we have $d(v_L)-d(r_t) > \max\{1, 1/3 \cdot (B - d(r_t))\}\geq 1$ and $d(v_R)-d(r_t) > \max\{1, 1/3 \cdot (B - d(r_t))\} \geq 1$ and therefore neither $v_L$ nor $v_R$ are children of $r_t$. Suppose, for the sake of contradiction, there is a child $v$ of $r_t$ such that both $v_L$ and $v_R$ are contained in $T(v)$. By the definition of $v_L$ and $v_R$, the subtrees below all other children of $r_t$ must be completely explored. This means $r_t$ only has one child leading to an unexplored vertex. We cannot have $v_L=v_R=v$ as $v_L$ and $v_R$ are not children of $r_t$. But then the root $r_t$ would be moved down to $v$ and possible further at the beginning of iteration~$t$. This is a contradiction. Therefore, $S^{(1)}$ and $S^{(2)}$ are edge-disjoint, non-empty trees and $v_L$ is contained in $S^{(1)}$ and $v_R$ in $S^{(2)}$.

Agent~$A_1$, which moves according to the call R-DFS($S^{(1)},r_t$), moves to $r_t$ using $d(r_t)$ energy and starts an R-DFS making a progress of at least $d(v_L)-d(r_t) > 1/3 \cdot (B - d(r_t))$ on the overall L-DFS sequence, as the part of the L-DFS sequence from $v_L$ to $r_t$ has not been covered by any other agent and has length at least $d(v_L)-d(r_t)$. If $A_1$ does not completely explore $S^{(1)}$, then it makes even a progress of $B-d(r_t)$ on the overall L-DFS sequence.

The second agent used in iteration~$t$, the agent~$A_2$, first moves to $r_t$ using $d(r_t)$ energy and then performs an L-DFS according to the call  L-DFS($S^{(2)},r_t$). We have  $d(v_R)-d(r_t)  > 1/3 \cdot (B - d(r_t))$ and hence $A_2$ makes a progress of at least $1/3 \cdot (B - d(r_t))$ edges  on the overall L-DFS sequence, as the part of the sequence from $r_t$ to $v_R$ has not been covered by any other agent. If $A_2$ does not completely explore $S^{(2)}$, then it also makes a progress of $B-d(r_t)$ on the overall L-DFS sequence.

Let $s \in \{0,1,2\}$ be the number of subtrees among $\{S^{(1)},S^{(2)}\}$ that $A_1$ and $A_2$ do not explore completely. By the above argument, we showed that overall $A_1$ and $A_2$ together make a progress of at least $2/3 \cdot (B - d(r_t)) + s \cdot  2/3 \cdot (B - d(r_t))$
edges on the overall L-DFS sequence of $T$. Adding this progress to
the  lower bound~\eqref{num-edges-covered-t-1-version2} and using $S_t \in \mathcal{U}'_{t-1} \setminus \mathcal{U}_t$ again yields the lower bound~\eqref{lb-edges-covered-upto-iteration-t} for iteration~$t$. 

In order to show the claim, let us consider the last iteration~$\overline{t}$. If \algDivEx can complete this iteration, then the claim follows directly from the lower bound~\eqref{lb-edges-covered-upto-iteration-t} because  $\tfrac{2}{3} (B - d(r_S))\geq 0$ for all $S\in \mathcal{U}_t$ as no agent can explore a vertex below depth $B$ in $T$. Now assume that iteration~$\overline{t}$ is not completed.
But then we have that the number of edges  of the total L-DFS sequence of $T$ covered by the agents up to iteration~$\overline{t}-1$ is at least
\begin{align*}
\tfrac{2}{3}(|T^R_{\overline{t}}|-1) + \sum_{S \in \mathcal{U}'_{\overline{t}-1}} \tfrac{2}{3} (B - d(r_S)) +  \sum_{0\leq i \leq \overline{t}-1} \tfrac{2}{3} \cdot k_i \cdot (B-d(r_i)) 
\end{align*}
by the lower bound~\eqref{num-edges-covered-t-1-version2}. This lower bound already implies the claim, as we have $k_{\overline{t}}=1$ and $\sum_{S \in \mathcal{U}'_{\overline{t}-1}} \tfrac{2}{3} (B - d(r_S)) \geq \tfrac{2}{3} \cdot k_{\overline{t}} \cdot (B-d(r_{\overline{t}}))$. 
\end{proof}

With the lower bound above, we can now prove the main result of this
section.

\begin{theorem}
The algorithm \algDivEx is 3-competitive.
\end{theorem}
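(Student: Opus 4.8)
The plan is to combine Lemma~\ref{lem_bound_ldfs_covered} with a matching upper bound on $|\opt|$, after disposing of the easy case. If \algDivEx completely explores $T$ then $|\alg|=n=|\opt|$ and the ratio is $1$, so assume it does not. Then the while-loop can only have stopped because no agent remains at $r$, i.e.\ all $k$ agents were used, so $\sum_{0\le i\le\overline{t}}k_i=k$, and Lemma~\ref{lem_bound_ldfs_covered} applies. Write $E:=\tfrac{2}{3}(|T^R_{\overline{t}}|-1)+\sum_{0\le i\le\overline{t}}\tfrac{2}{3}k_i(B-d(r_i))$ for the guaranteed number of covered directed edges. I will prove the theorem from two inequalities: a lower bound $|\alg|\ge E/2$ and an upper bound $|\opt|\le\tfrac{3}{2}E$; together they give $|\opt|\le\tfrac{3}{2}E\le 3|\alg|$.

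The lower bound on $|\alg|$ is the routine half of the argument. The set of vertices visited by \algDivEx is connected and contains $r$, hence it induces a subtree with some number $p$ of edges and exactly $p+1$ vertices, so $|\alg|=p+1$. Every undirected edge of this subtree contributes at most two directed edges to the total L-DFS sequence, so the number of distinct covered directed edges is at most $2p$; combined with Lemma~\ref{lem_bound_ldfs_covered} this yields $E\le 2p$ and therefore $|\alg|=p+1\ge E/2$. Unfolding $E$, this is exactly $|\alg|\ge\tfrac{1}{3}(|T^R_{\overline{t}}|-1)+\sum_{0\le i\le\overline{t}}\tfrac{1}{3}k_i(B-d(r_i))$.

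The heart of the proof is the upper bound $|\opt|\le\tfrac{3}{2}E=(|T^R_{\overline{t}}|-1)+\sum_{0\le i\le\overline{t}}k_i(B-d(r_i))$. Here I would use that, at termination, the edges of $T$ are partitioned into the edges of $T^R_{\overline{t}}$ and the edges of the pairwise edge-disjoint subtrees in $\mathcal{T}_{\overline{t}}$, the two parts meeting only at the roots $r_S$. Accordingly I split the vertices explored by \opt into those lying in $T^R_{\overline{t}}$, of which there are at most $|T^R_{\overline{t}}|$, and those lying strictly below some root $r_S$. For the latter I charge against energy: any \opt-agent that reaches a vertex strictly below $r_S$ must first descend to $r_S$, spending at least $d(r_S)$ of its budget, so it can open at most $B-d(r_S)$ fresh downward edges, i.e.\ discover at most $B-d(r_S)$ new vertices, inside that subtree. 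Summing these per-agent contributions should produce a term of the form $\sum_i k_i(B-d(r_i))$.

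The hard part will be turning this intuition into a valid global bound, for two reasons. First, a single \opt-agent may dip into several of the subtrees of $\mathcal{T}_{\overline{t}}$, so I cannot simply count ``one agent per subtree''; I expect to handle this by charging each deep vertex to the downward edge by which \opt first reaches it and bounding the total number of such edges by the residual budgets, using that the descent energy to the various roots is itself paid out of the same budget $B$. Second, and more importantly, I must show that the depths $d(r_i)$ produced by \algDivEx are the \emph{smallest} depths any strategy could pay: because the algorithm always works on the subtree with the highest (minimum-depth) root and only refuses to explore a subtree cheaply when the split guarantees enough untouched L-DFS material in both halves, the multiset of depths $d(r_i)$ taken with multiplicities $k_i$ is majorized by the depths at which \opt is forced to operate its $k$ agents. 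Establishing this majorization, and thereby that the total residual budget of \opt's agents is at most $\sum_i k_i(B-d(r_i))$, is where the structural properties of the algorithm — the greedy choice of root and the $\tfrac{1}{3}(B-d(r_S))$ threshold — must be invoked. Once the two inequalities are in hand, the theorem is immediate: $|\opt|\le\tfrac{3}{2}E\le 3|\alg|$.
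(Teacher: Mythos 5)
Your lower-bound half is correct and matches the paper: dividing the bound of Lemma~\ref{lem_bound_ldfs_covered} by two (each undirected edge appears twice in the L-DFS sequence) and adding one for the root gives exactly the paper's inequality $|\alg| \geq \tfrac{1}{3}|T^R| + \sum_{i} \tfrac{1}{3}(B-d_i)$. The gap is in the other half. You reduce everything to the claim that the multiset of depths $d(r_i)$ (with multiplicities $k_i$) is dominated by the depths at which \opt ``is forced'' to operate its agents, and you defer its proof. But that claim is \emph{false} as stated: \opt is not forced to pay those depths. For instance, if the tree is essentially one long path with small decorations, \opt can explore it with a single deep agent and leave the remaining agents idle or shallow, whereas \algDivEx sends later agents to roots of strictly positive depth; sorting both sides then gives indices $j$ with $d_j > d_j^*$, where $d_j^*$ is the $j$-th smallest maximal depth reached by an \opt agent inside $T^R$. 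So no direct majorization argument of the kind you envision can close the proof.

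The paper's resolution is precisely designed for this failure case, and it is the step your proposal is missing. One first bounds $|\opt|\leq |T^R|+\sum_i (B-d_i^*)$ using \opt's \emph{own} depths $d_i^*$ (this sidesteps your worry about one agent dipping into several subtrees, since $d_i^*$ is a per-agent, not per-subtree, quantity). Then one takes the \emph{maximal} index $j$ with $d_j > d_j^*$ and shows that the first $j$ agents of \opt can only visit vertices inside $T^j_{\alg}$, the region explored by the first $j$ agents of \algDivEx: any vertex outside $T^j_{\alg}$ lies below the root $r_S$ of some subtree $S\in\mathcal{T}_{t_j}$, and all such roots lie in $T^R$ at depth at least $d_j$, so visiting such a vertex would force $d_i^*\geq d_j$, a contradiction. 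Consequently \opt's first $j$ agents can be \emph{replaced} by copies of \algDivEx's first $j$ agents without decreasing $|\opt|$, and after this exchange $d_i\leq d_i^*$ holds for all $i$, at which point the comparison $\sum_i(B-d_i^*)\leq\sum_i(B-d_i)$ goes through. Without this exchange argument (or some substitute for it), your target inequality $|\opt|\leq\tfrac{3}{2}E$ cannot be established, so the proposal as written does not constitute a proof.
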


\begin{proof}
Assume that the algorithm \algDivEx terminates after iteration~$\bar{t}$. If it completely explores $T$, then it is clearly optimal. So let us assume that it runs out of agents in iteration~$\bar{t}$.

Let $A_1,A_2,\ldots, A_k$ be the sequence of agents used by \algDivEx
in this order and let agent~$A_i$ be used in iteration~$t_i$. We let
$d_i:=d(r_{t_i})$ be the depth of the root of the subtree visited by $A_i$ in iteration~$t_i$. As the algorithm in every iteration chooses the subtree $S$ with an unexplored vertex which minimizes $d(r_{S})$, we have $d_1 \leq d_2 \leq \ldots \leq d_k$. 

Note that every undirected edge $\{v,w\}$ of the tree appears exactly
twice as a directed edge in the total L-DFS sequence of $T$, as
$(v,w)$ and as $(w,v)$. Thus dividing the bound given by
Lemma~\ref{lem_bound_ldfs_covered} by two yields a lower bound
on the number of distinct undirected edges traversed by the agents. As
$T$ is a tree, this number plus 1 is a lower bound on the number of
vertices visited by the agents. Thus, using the notation $T^R$ instead
of $T^R_{\bar{t}}$, we obtain
\begin{align}
|\alg| \geq \tfrac{1}{3}|T^R|+  \sum_{1\leq i\leq k} \tfrac{1}{3} \cdot (B-d_i).\label{ineq-lb-alg}
\end{align}

Let now $A_1^*,\ldots, A_{k}^*$ be the $k$ agents used by an optimal offline algorithm \opt 
and let $d^*_i$ be the maximum depth of a vertex in $T^R$ that is visited by the agent~$A_i^*$. This is well-defined as every agent at least visits the root~$r$ of $T^R$. We assume without loss of generality that $d^*_1 \leq d^*_2 \leq \ldots \leq d^*_{k}$.
As the agent~$A_i^*$ must use at least $d^*_i$ energy to reach a vertex at depth $d^*_i$ in $T^R$, we have
\begin{align}
|\opt|\leq |T^R|+  \sum_{1\leq i \leq k} (B-d^*_i).\label{ineq-ub-opt}
\end{align}

Consider the maximal index $j \in \{1,\dots,k\}$ such that $d_j >d^*_j$.  
If no such $j$ exists, $d_i \leq d_i^*$ holds for all $1\leq i \leq k$. 
This implies $\sum_{i=1}^{k} (B-d^*_i) \leq \sum_{i=1}^{k} (B-d_i)$ and thus also $|\opt|/ |\alg|\leq 3$ by~\eqref{ineq-lb-alg}~and~\eqref{ineq-ub-opt}.

Otherwise,  we have $d^*_1\leq d^*_2 \leq \ldots \leq d^*_j < d_j$. 
Let $T^j_{\alg}$ be the subtree explored by the first $j$ agents used by \algDivEx. 
We claim that all vertices explored by the agents $A^*_1, \ldots, A^*_j$ are contained in $T^j_{\alg}$. 
Assume, for the sake of contradiction, that there is $1\leq i \leq j$ such that agent $A^*_i$ explores a vertex $u$ which is not contained in $T^j_{\alg}$.
At the moment when the agent~$A_j$ is used by \algDivEx, the root $r_S$ of every subtree $S\in \mathcal{T}_{t_j}$ is contained in $T^R_{t_j}$ and it has depth at least $d_j$. Let $S'\in \mathcal{T}_{t_j}$ be the subtree containing $u$. This means that the agent $A^*_i$ must also visit $r_{S'}$ to reach $u$.
But $T^R_{t_j}$ is a subtree of $T^R$ and thus $A^*_i$ visits a vertex in $T^R$ of depth $d(r_{S'})\geq d_j$. This implies $d_i^*\geq d(r_{S'})\geq d_j$ contradicting the initial assumption that $d_i^*< d_j$.
Consequently, the
agents $A^*_1,\ldots, A^*_j$ in \opt only visit vertices in
$T^j_{\alg}$.  But then the first $j$ agents in \opt visit a strict
subset of the vertices visited by the first $j$ agents in \algDivEx.
In this case, we can just replace the agents $A^*_1, \ldots, A^*_j$
and their paths by the agents $A_1,\ldots, A_j$ and their paths in
\algDivEx and $|\opt|$ does not decrease.  By construction and by
maximality of $j$, we then have $d_i \leq d_i^*$ for all $1 \leq i
\leq k$ which again implies the claim. 
\end{proof}

Note that the analysis of \algDivEx is tight as shown in Appendix~\ref{app:tightanalysis}.

\section{A General Lower Bound on the Competitive Ratio} \label{sec:lb}

We first present an easy example showing a lower bound of 2 on the competitive ratio of any online algorithm.

\begin{proposition} \label{prop:lb}
There exists  no $\rho$-competitive online exploration algorithm with $\rho < 2$.
\end{proposition}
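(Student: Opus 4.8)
The plan is to prove the bound with an \emph{adaptive adversary}, exhibiting for every online algorithm an instance on which the optimum explores essentially twice as many vertices. The instance I would use has the root $r$ with $k$ disjoint \emph{tines}, where each tine is a path of length $t\approx B/3$ ending in a vertex $z_i$ at depth $t$. Exactly one tine, chosen by the adversary, carries at its tip a \emph{bundle}: the vertex $z_{i_0}$ has $k$ children, each starting a path of length $g=B-t\approx 2B/3$; the remaining $z_i$ are dead ends. The parameters are tuned so that $t>B/3$, hence $g<2t$: an agent that has walked down a \emph{wrong} tine to its tip $z_i$ has only $g<2t$ energy left, which is strictly less than the $2t$ needed to climb back to $r$ and descend the correct tine to $z_{i_0}$. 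In other words, committing an agent deep enough to learn whether a tine carries the bundle already disqualifies that agent from ever reaching the bundle.

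The key steps, in order, are as follows. First I would record the elementary offline strategy: send all $k$ agents straight down the correct tine $i_0$, one per leg; each agent spends $t+g=B$ energy, so together they harvest the whole correct tine together with all $k$ legs, giving $|\opt|\ge t+kg\approx B/3+k\cdot 2B/3$. Second, I would bound $|\alg|$ from above using the fact that the tip $z_i$, and hence the presence of the bundle, is revealed only when an agent physically reaches depth $t$. Since reaching any $z_i$ costs $t$ energy and no single agent can afterwards reach a second tip (as $g<2t$, and even returning to $r$ and re-descending another tine is too expensive), to inspect all $k$ tips the algorithm must devote a distinct agent to each tine, leaving no agent able to relocate. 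Third, and this is the crux, I would argue by a pigeonhole/adversary argument that at most one agent ever arrives at $z_{i_0}$ with energy to spare: the adversary declares the bundle to sit on a tine that receives the fewest agents at its tip, so the online algorithm explores at most one of the $k$ legs. Hence $|\alg|\le k\cdot t + g + O(1)\approx k\cdot B/3 + 2B/3$. Dividing, the ratio is at least $(1+2k)/(k+2)$, which tends to $2$ as $k\to\infty$, proving that no algorithm can be better than $2$-competitive.

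The main obstacle is the third step: ruling out \emph{every} online strategy rather than only the naive ones. One must check that the algorithm gains nothing by doubling agents on a promising tine (this only frees the adversary to hide the bundle on an uninspected tine, where the algorithm then finds no leg at all), nor by scouting tines sequentially with a single agent (the backtracking over already-counted tine vertices is pure wasted energy, and with $t>B/3$ one agent cannot even reach a second tip). The delicate point is thus to phrase a single inequality, valid against parallel hedging, sequential scouting, and all-in strategies alike, stating that the agents collectively reach at most one full leg of the bundle while the tine vertices they do collect contribute only a lower-order $k\cdot B/3$ term; combined with the adversary's freedom to place the bundle, this forces the ratio to $2$. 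This is precisely the backtracking-versus-near-root-coverage tension that the subsequent $2.17$ lower bound sharpens.
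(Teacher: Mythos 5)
Your proof is correct, but it takes a genuinely different route from the paper's. The paper's construction is a star centered at $r$ with $k$ rays of length $B$ and $k\cdot B/2$ rays of length $1$; its adversary simply rules that every unexplored edge chosen at $r$ leads to a short ray, so each online agent spends two units of energy per new leaf and explores at most $B/2$ vertices, while \opt sends each agent down a long ray. This gives a ratio approaching $2$ for every $k$ (even $k=1$) and any even $B$, with essentially no case analysis. Your spider-with-hidden-bundle instance instead penalizes commitment at depth $t>B/3$: the inequality $B<3t$ guarantees that each agent can reach at most one tip in its lifetime, so when all $k$ tips are inspected each tip is reached by exactly one agent, only that single agent can ever see the bundle, and $|\alg|\leq 1+kt+g$ against $|\opt|\geq 1+t+kg$, giving a ratio of roughly $(1+2k)/(k+2)$, which tends to $2$ only as $k\to\infty$. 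What your approach buys is exactly the mechanism that the paper's Section~4 sharpens into the $2.17$ bound (deep adaptive subtrees whose inspection disqualifies agents from relocating); what the paper's approach buys is brevity and uniformity in $k$. One point you should tighten: an adaptive adversary cannot retroactively declare the bundle to sit on ``a tine that receives the fewest agents at its tip,'' because the degree of $z_i$ is revealed to the algorithm at the moment of first visit. The implementable rule is to declare every tip a dead end until either the run ends (then place the bundle on an unvisited tine) or the $k$-th distinct tip is reached (then place it there). Since your energy argument shows each agent visits at most one tip, this rule produces exactly the outcome you claim, so the repair is cosmetic rather than structural.
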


\begin{proof}
Given positive integers~$k$ and~$B$, where $B$ is even, consider the star with center $r$, $k$ rays of length~$B$ and $k\cdot B/2$ rays of 
length~$1$.
For every algorithm, the adversary can ensure that no agent that starts at $r$ ever enters a long 
ray: Whenever an agent is at $r$ and decides to follow an unexplored edge, the adversary directs it to a short ray. 
Therefore, every agent can explore at most~$B/2$ edges and all $k$ agents together at most~$k \cdot B/2$ edges as $B$ is even. On the other hand, the offline optimum sends all agents in the long rays and explores~$k \cdot B$ edges. 
\end{proof}

Note that this lower bound only requires that $B$ is even and otherwise works for any choice of parameter $k$ and $B$. For the lower bound of $(5 + 3 
\sqrt{17})/8 \approx 2.17$ on the competitive ratio, we present a sequence of instances where $k$ and $B$ become arbitrarily large.
We initially construct an instance with general parameters and at the end choose the parameters to maximize the competitive ratio that the online 
algorithm can achieve.
The lower bound instances that we construct are trees that contain very long paths and high degree vertices at certain depth in the tree. The length 
of the paths is determined by the online exploration algorithm.

\begin{restatable}{theorem}{theoremLBcompetitiveRatio}
There exists no $\rho$-competitive online exploration algorithm with $\rho<(5 + 3 \sqrt{17})/8 \approx 2.17$.
\label{theo-lb-competitive-ratio}
\end{restatable}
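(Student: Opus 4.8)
The plan is to exhibit, for arbitrarily large $k$ and $B$, a family of tree instances together with an adaptive adversary that simultaneously punishes any online algorithm for two conflicting reasons, while an offline optimum suffers from neither. Concretely, I would hang off the root $r$ a large number of externally indistinguishable \emph{gadgets}, each revealed to the algorithm one edge at a time as a long downward path. The adversary fixes a target depth of the form $\alpha B$, for a constant $\alpha\in(0,1)$ to be optimized, at which it may or may not terminate a path with a \emph{fan}: a vertex of very high degree whose children are leaves. The valuable vertices thus live either deep inside a few fans (expensive to reach and harvest) or in a reservoir of shallow leaves near $r$ (cheap, but kept hidden), and the adversary reveals each kind of treasure precisely where the online algorithm is weakest. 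This matches the stated features of the construction: very long paths whose lengths are \emph{determined by the algorithm's behavior}, together with high-degree vertices placed at a chosen depth.

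The heart of the argument is the adversary's revelation rule, which I would design to maintain the following invariant against \emph{every} online strategy: whenever an agent pushes deep into a gadget, the adversary keeps that gadget a decoy by extending its path, so the energy is wasted on path traversal; conversely, it commits a fan only to a gadget the algorithm ends up under-exploring, so the algorithm reaches each fan (if at all) with little residual energy and harvests few leaves. Simultaneously it withholds the block of shallow leaves near $r$ as long as the agents are committed elsewhere. Because the agents start colocated and the gadgets are indistinguishable, no online algorithm can locate the fans or the shallow reservoir in advance, and global communication does not help since there is nothing to transmit until an edge is traversed. The delicate part, which I expect to be the main obstacle, is to prove that this invariant forces a clean upper bound on $|\alg|$ no matter how the $k$ agents are split among the gadgets, interleaved, or recalled; this requires an amortized charging argument (or an exchange/averaging argument over the agents) that caps the total progress any online strategy can make against the adversary.

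With the invariant in hand I would bound the two quantities separately. For the online algorithm I would charge each unit of harvested treasure against the energy its agents are forced to spend on path traversal, yielding an upper bound on $|\alg|$ that, up to lower-order terms in $B$ and $k$, is a linear function of $\alpha$ and of the number of gadgets. For the offline optimum I would exhibit an explicit strategy that, knowing the full tree, dedicates some agents to drive straight to the fans (paying $\alpha B$ to reach a fan and then sweeping its leaves with the remaining $(1-\alpha)B$ energy) and the rest to sweep the shallow reservoir, giving a matching lower bound on $|\opt|$. Taking the ratio produces $|\opt|/|\alg|\ge f(\alpha)$, valid in the limit $k,B\to\infty$.

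Finally I would optimize over $\alpha$ together with the auxiliary parameters controlling the fan width and the number of gadgets per agent. Balancing the two sources of loss, namely energy wasted reaching deep fans against shallow leaves missed, makes the worst-case ratio $\rho$ satisfy a quadratic equation, which I expect to reduce to
\begin{align*}
4\rho^2 - 5\rho - 8 = 0,
\end{align*}
whose positive root is exactly $\rho=(5+3\sqrt{17})/8\approx 2.17$. Since the construction works for arbitrarily large $k$ and $B$ and the lower-order terms vanish in the limit, this establishes that no online algorithm can be better than $(5+3\sqrt{17})/8$-competitive.
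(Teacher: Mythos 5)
Your plan captures the right general flavor---an adaptive adversary, indistinguishable gadgets whose paths are extended to waste the algorithm's energy, and a final one-parameter optimization---but the construction as you describe it cannot force the bound $(5+3\sqrt{17})/8$, and the step you explicitly defer (``the main obstacle'') is in fact the entire technical content of the proof. The structural ingredient you are missing is that the hard instance needs \emph{two} levels of high-degree vertices: one at depth $d_1$, present in every gadget, and a second one at depth $d_2$, revealed adaptively only when the first agent in a gadget penetrates to that depth, with the parameters coupled by $d_1 + 2d_2 \geq B$. This coupling is what creates the dilemma: an agent that reaches depth $d_2$ cannot afford to visit any other gadget, so completing a gadget costs the algorithm an entire committed agent, while refusing to complete it leaves an unexplored vertex under which the adversary hangs paths of length $B$ for \opt. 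Your single fan depth $\alpha B$ together with a ``shallow reservoir near $r$'' is essentially the structure of the easy lower bound of $2$ (the star with long and short rays, Proposition~\ref{prop:lb}); without the second, deeper fan there is nothing that caps the first agent's harvest in a gadget at $(B+d_2)/2$ vertices and nothing that forces commitment, and optimizing such a one-level construction does not beat $2$ by the margin claimed.

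Two further mechanisms that your invariant would need, and that you do not supply, are: (i) a pigeonhole count---the paper plays $k = 2l-1$ agents against $l$ gadgets, so at least one gadget receives at most one ``fresh'' agent and is therefore guaranteed, via a per-gadget budget $N_i$ on non-leaf vertices sized so that one agent alone can never exhaust it, to retain an unexplored vertex at depth roughly $d_1$; this is where the cheap treasure for \opt actually lives, not near the root; and (ii) the amortized accounting itself, which in the paper is a delicate case analysis (what happens when a second fresh agent arrives, whether the first agent has reached depth $d_2$ or can still reach it, how vertices on the path between the two fans are attributed) occupying two multi-part lemmas. Asserting that the optimization ``reduces to'' $4\rho^2 - 5\rho - 8 = 0$ is consistent with the known answer---that is indeed the quadratic satisfied by $(5+3\sqrt{17})/8$---but nothing in your construction produces the two competing ratios (in the paper, $(8B-8d_1)/(3B-d_1)$ versus $(8B-4d_1)/(5B-7d_1)$ after substituting $d_2=(B-d_1)/2$) whose balancing yields that equation, so the numerical agreement is not evidence that the plan works; it has to come out of the charging argument you left open.
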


\begin{figure}
\begin{center}

\usetikzlibrary{arrows,intersections}
\tikzstyle{gnode}=[circle,draw,minimum size=2em,scale=0.7]
\tikzstyle{lnode}=[scale=0.65]
\newcommand*{\xf}{0.6}
\newcommand*{\yf}{0.6}
\begin{tikzpicture}[scale=1]
\node[gnode] (r) at (1*\xf,0) {};
\node[lnode] (l1) at (1*\xf,0) {$r$};

\foreach \j in {-5,-2,1,7} {
	\node[gnode] (a\j) at (\j*\xf,-1*\yf)  {};
	\draw (r)--(a\j);
	\draw[ dotted]   (\j*\xf+0.2*\xf,-2.2*\yf)--(\j*\xf+1*\xf,-2.2*\yf);
}
\node[lnode] (l) at (-5*\xf,-1*\yf)  {$v^{(1)}_{1}$};
\node[lnode] (l) at (-2*\xf,-1*\yf)  {$v^{(1)}_{2}$};
\node[lnode] (l) at (1*\xf,-1*\yf)  {$v^{(1)}_{3}$};
\node[lnode] (l) at (7*\xf,-1*\yf)  {$v^{(1)}_{l}$};

\foreach \i in {0,1,2,6} {
	\draw (a-5)-- ++(-1.2*\xf+\i*0.4*\xf,-0.5*\yf)--++(0,-2.5*\yf-\i*-4*0.04+\i*0.06);
}
\foreach \i in {0,1,2,3,6} {
	\draw (a-2)-- ++(-1.2*\xf+\i*0.4*\xf,-0.5*\yf)--++(0,-2*\yf-\i*-2*0.04+\i*0.06);
}
\foreach \i in {0,1,2,3} {
	\draw (a1)-- ++(-1.2*\xf+\i*0.4*\xf,-0.5*\yf)--++(0,-2*\yf-\i*3*0.04+\i*0.06);
}
\foreach \i in {1,2,3,6} {
	\draw (a7)-- ++(-1.2*\xf+\i*0.4*\xf,-0.5*\yf)--++(0,-1*\yf-\i*6*0.04+\i*0.06);
}

\node[lnode] (l) at (-5*\xf,-4*\yf)  {$v^{(2)}_{1}$};
\node[gnode] (v21) at (-5*\xf,-4*\yf)  {};
\draw(a-5)--(v21);
\foreach \i in {1,2,3,6} {
	\draw (v21)-- ++(-1.2*\xf+\i*0.4*\xf,-0.5*\yf)--++(0,-1*\yf-\i*3*0.04+\i*0.06);
}
\draw[ dotted]   (-5*\xf+0.2*\xf,-5*\yf)--(-5*\xf+1*\xf,-5*\yf);

\node[lnode] (l) at (2.2*\xf,-4*\yf)  {$v^{(2)}_{3}$};
\node[gnode] (v23) at (2.2*\xf,-4*\yf)  {};
\draw (a1)-- ++(-1.2*\xf+6*0.4*\xf,-0.5*\yf)--(v23)--++(0,-1.5*\yf);

\node[lnode] (l) at (5.8*\xf,-4.2*\yf)  {$v^{(2)}_{l}$};
\node[gnode] (v2r) at (5.8*\xf,-4.2*\yf)  {};
\draw (a7)-- ++(-1.2*\xf+0*0.4*\xf,-0.5*\yf)--(v2r);
\foreach \i in {1,2,3,6} {
	\draw (v2r)-- ++(-1.2*\xf+\i*0.4*\xf,-0.5*\yf)--++(0,-2.5*\yf-\i*-4*0.04+\i*0.06);
}
\draw[ dotted]   (5.8*\xf+0.2*\xf,-5*\yf)--(5.8*\xf+1*\xf,-5*\yf);

\draw[loosely dotted]  (2.5*\xf,-1*\yf)--(5.5*\xf,-1*\yf) node[lnode,right] {};
\node[scale=1] (l1) at (11*\xf,-1*\yf)  {depth $d_1$};
\node[scale=1] (l2) at (11*\xf,-4*\yf)  {depth $d_2$};
\end{tikzpicture}
\vspace*{-0.6cm}
\end{center}
\caption{Tree for the lower bound on the competitive ratio.}
\label{fig-lower-bound-tree}
\end{figure}
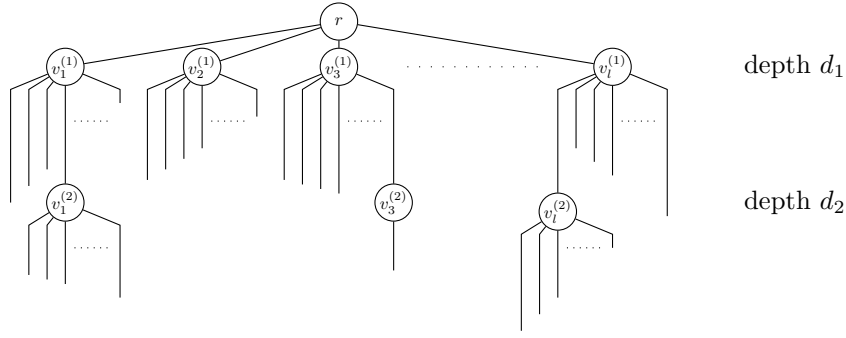

For a given online algorithm \alg, we consider a set of $k= 2 l-1$ agents $\agentset$ for $l \in \mathbb{N}$ with energy $B$ each and we let 
$\Delta={\left\lceil \sqrt{2\cdot  l \cdot B}\right\rceil +2l}$. We now construct a tree~$T$, which is shown in Figure~\ref{fig-lower-bound-tree}, depending on the 
behavior of the algorithm. $T$ has a root~$r$ with $l$ distinct paths, each going from $r$ to a vertex~$\smash{v^{(1)}_i}$ at depth~$d_1$ 
for $i=1,\ldots, l$. Each vertex~$\smash{v^{(1)}_i}$  has degree~$\Delta+1$ and is the root of a subtree~$T_i$. There are $\Delta$~paths connected to every~$\smash{v^{(1)}_i}$ whose length will be determined by the algorithm.
Furthermore, depending on the algorithm, there may exist a vertex~$\smash{v^{(2)}_i}$ at depth~$d_2$ that has  degree~$\Delta+1$ and also $\Delta$~paths connected to it whose length will be determined by the algorithm. We call the subtrees with root~$\smash{v^{(1)}_i}$ and~$\smash{v^{(2)}_i}$ \emph{adaptive subtrees} as they depend on the behavior of the online exploration algorithm.
We further assume that $B$, $d_1$, $d_2$ are even and
\begin{align}
  d_1 + \Delta  < d_2 \leq \tfrac{5}{3} \cdot d_1  \quad \text{ and } \quad 3 \cdot d_1 < B \leq d_1 + 2 \cdot d_2 . \label{assumptions-parameters}
\end{align}

Each of the adaptive trees can be \emph{active}, i.e., as soon as an
agent visits an unexplored vertex on a path another unexplored neighbor
is presented, or \emph{passive}, i.e., all unexplored vertices in the
adaptive tree are leaves. Moreover, every
subtree $T_i$ has a budget $N_i$, which limits the total number of
non-leaf vertices that are presented to the algorithm, i.e., if
$N_i$ vertices that are not leaves have been explored in $T_i$ both
adaptive trees in $T_i$ become passive and from now on all unexplored vertices in $T_i$ are leaves.  The budget $N_i$ is initially $2$ and is increased as
described below when agents enter the subtree $T_i$. Initially every subtree $T_i$ has an active adaptive subtree below $\smash{v^{(1)}_i}$. We now present
new vertices to the algorithm in every subtree $T_i$ for $i \in
{1,\ldots, l}$ according to the following rules:

\begin{enumerate}
\item \emph{When the first agent $A_1$ that has not visited any other tree $T_j\neq T_i$ before enters $T_i$ for the first time:}

The budget $N_i$ of $T_i$ is increased by $(B+d_2)/2-d_1+2\Delta$, the adaptive tree below $\smash{v^{(1)}_i}$ is active and $\smash{v^{(2)}_i}$ has not been discovered. 
The first vertex at depth $d_2$ discovered by $A_1$ is $\smash{v^{(2)}_i}$, i.e., it has degree $\Delta+1$ and is the root of another adaptive tree which is active. 
Additionally, if $A_1$ explores a new vertex $v$ at depth $d>d_2$ in $T_i$ (below~$\smash{v^{(2)}_i}$ or on any branch below~$\smash{v^{(1)}_i}$) 
and the remaining energy of $A_1$ is $\leq d-d_2$, then we stop presenting new vertices on the current path of $A_1$, i.e., $v$ is a vertex without further unexplored neighbors.

\item \emph{When the second agent $A_2$ that has not visited
 any other tree $T_j\neq T_i$ before enters $T_i$ for the first time:} 
\begin{enumerate}
\item \emph{If $A_1$ has explored at most $(d_1+d_2)/2$ vertices in $T_i$:} \\
The adaptive trees both at $\smash{v^{(1)}_i}$ and at $\smash{v^{(2)}_i}$ become passive.
In all following cases below, we assume that $A_1$ explored more than  $(d_1+d_2)/2$ vertices in $T_i$.
\label{case-a1-only-few-vertices}

\item \emph{If $A_1$ has explored the vertex  $\smash{v^{(2)}_i}$ or
  still has enough energy left to reach a vertex $v$ at depth $d_2$
  via an unexplored vertex:} \label{case-a1-reached-v2}
  
If $\smash{v^{(2)}_i}$  has been discovered, the adaptive tree at $\smash{v^{(1)}_i}$ becomes passive, but the adaptive tree at $\smash{v^{(2)}_i}$ remains active. 
If $A_1$ has not visited a vertex at depth $d_2$, then
the adaptive tree at $\smash{v^{(1)}_i}$ becomes passive except for the path via an unexplored vertex to $v^{(2)}_i:=v$ at depth $d_2$, which $A_1$ can reach with its remaining energy. 
From now on, if any agent $A$ is at depth $d>d_2$, then we stop presenting new vertices on the current path of $A$ as soon as the remaining energy is $\leq d-d_2$.

\item \emph{If $A_1$ has not visited a vertex at depth $d_2$ and has not enough energy to reach a vertex at depth $d_2$ via an unexplored vertex:} 

From now on if any agent $A$ is at depth $d >d_1$, we stop presenting new vertices on the current path of $A$ if the remaining energy of $A$ is $\leq d-d_1$.
\end{enumerate}

\item \emph{Whenever an agent $A$ which before has visited a tree $T_j
  \neq T_i$ enters $T_i$ for the first time with remaining energy
  $B_A$:}
  
The budget $N_i$ of $T_i$ is increased by $B_A/2+2$.
If $A$ discovers a vertex $v$ below $\smash{v^{(2)}_i}$  at depth $d>d_2$ and the remaining energy of $A$ is $\leq d-d_2$, then we stop presenting new 
vertices  on this path. Similarly, if
$A$ discovers a vertex $v$ below $\smash{v^{(1)}_i}$ at depth $d> d_1$ (but not on a branch containing $\smash{v^{(2)}_i}$) and the remaining energy of $A$ is $\leq d-d_1$, then we also stop presenting new 
vertices on that path.
\end{enumerate}

Note that in every tree $T_i$, if Case~2b does not occur in $T_i$,
$v_i^{(2)}$ and the adaptive subtree below $v_i^{(2)}$ exist if and only if $A_1$
discovers a vertex $v$ at depth $d_2$.
\medskip

The full proof of the lower bound of $(5 + 3 \sqrt{17})/8 $ on the
competitive ratio is quite technical and given in
Appendix~\ref{app:lb}. Here we want to give some intuition by looking
at two special cases and making some simplifying assumptions, which
do not hold in general. The adaptive trees are constructed in a way
that a path ends exactly when the agent currently exploring that path
has just enough energy to return to $\smash{v^{(1)}_i}$ or
$\smash{v^{(2)}_i}$ respectively. So let us make the simplifying
assumption that the final position of every agent is either at
$\smash{v^{(1)}_i}$ or $\smash{v^{(2)}_i}$ for some $i \in \{1,\ldots,
l\}$. The online algorithm has to balance between sending each agent
to only one subtree~$T_i$ to completely explore it or to move to a
second subtree $T_j$ later to explore more vertices which are close to
the root $r$. We will consider instances with increasing values of $B$
and $l$ in such a way that $l = o(B)$. Note that this implies that
$\Delta = o(B)$.

Let us consider the special case that the algorithm first sends one
agent to each of the subtrees $T_1,\ldots, T_l$ and then a second
agent to every subtree except $T_1$ (there are $2 l-1$ agents and $l$
subtrees). For sake of simplification, assume that $A_1$ visits 
$\smash{v^{(2)}_i}$  and Case 2b occurs in
each subtree $T_i$ when the second agent $A_2$ enters $T_i$. Note that
in this case, $A_1$ cannot visit another subtree as it visits
$\smash{v^{(2)}_i}$ at depth $d_2$ and $2d_2+d_1 \geq B$ by
\eqref{assumptions-parameters}. We further assume
that for each subtree $T_i$, $2 \leq i \leq l$, either the second
agent $A_2$ entering $T_i$ helps $A_1$ to explore $T_i$ completely, or
it goes to $T_1$ to explore new vertices.

The first agent $A_1$ in each subtree $T_i$ can explore at most
$(B+d_2)/2$ vertices in $T$ if its final position is at
$\smash{v^{(2)}_i}$ (it traverses at most $d_2$ edges once and all
other edges are traversed an even number of times) and less vertices
if its final position is at $\smash{v^{(1)}_i}$.  Note that $d_1-2$ of
the vertices explored by $A_1$ are on the path from $r$ to
$\smash{v^{(1)}_i}$ and thus $A_1$ can only explore at most
$(B+d_2)/2-d_1+2$ vertices in $T_i$.  But by construction the budget
$N_i$ is increased by $(B+d_2)/2-d_1+2\Delta$ when $A_1$ enters $T_i$
so that $A_1$ alone cannot deplete the whole budget and completely
explore $T_i$.

As the subtree below $\smash{v^{(1)}_i}$ becomes passive when $A_2$
enters $T_i$, $A_2$ can only explore at most $\Delta$ vertices that
are not below $\smash{v^{(2)}_i}$. Therefore if $A_1$ and $A_2$
completely explore $T_i$, $A_2$ has to go to depth $d_2$ and then it
cannot visit any other subtree as $2d_2+d_1 \geq B$ by
\eqref{assumptions-parameters}. In this case, agents $A_1$ and $A_2$
together then explore at most $N_i$ vertices in $T_i$ plus at most $2
\Delta$ leaves and the path of length $d_1$ leading to $T_i$, i.e.,
they explore at most $(B+d_2)/2+4\Delta+2 = (B+d_2)/2 + o(B)$ vertices.

Suppose now that $A_1$ and $A_2$ do not completely explore the subtree
$T_i$ and that $A_2$ goes to $T_1$ to explore new vertices after
having visited $T_i$. 
Assume that $A_2$ has $B_{A_2}$ energy left when
it enters $T_1$, and note that $B_{A_2} \leq (B-3d_1)/2$ since $A_2$
went first to $T_i$ before entering $T_1$. Agent  $A_2$ can explore at most
$B_{A_2}/2$ new vertices in $T_1$ if its final
position is in $\smash{v^{(1)}_i}$ (every edge it traverses in $T_1$
is traversed an even number of times) and less vertices if its final
position is in $\smash{v^{(2)}_i}$ (since the vertices on the branch
from $\smash{v^{(1)}_i}$ to $\smash{v^{(2)}_i}$ have already been explored). Note that when $A_2$ enters $T_1$, the budget $N_1$ of $T_1$
is increased by $B_{A_2}/2 + 2$ and thus the budget of $T_1$ is never
depleted. As $A_2$ has $B_{A_2}$ energy left when it enters $T_1$ and
spends $3d_1$ energy to first reach $T_i$ and then $T_1$, 
it can have explored at most $(B-3d_1-B_{A_2})/2$ vertices in $T_i$ because
$A_2$ traverses every edge in $T_i$ an even number of times. Overall,
$A_2$ thus explores at most  $(B-3d_1)/2$ new vertices
and $A_1$ at most $(B+d_2)/2$ vertices in this case.

Recall that for sake of simplification, we consider only two strategies for the
online algorithm \alg: either in every tree $T_i$, $2 \leq i \leq l$, $A_1$
and $A_2$ completely explore $T_i$, or for every tree $T_i$, $2 \leq i
\leq l$, the second agent $A_2$ entering $T_i$ also visits $T_1$ (and
$T_i$ is not completely explored by the algorithm). In the first case,
the algorithm explores at most $l \cdot (B+d_2)/2 + o(lB)$
vertices. In the second case, the algorithm explores at most $l \cdot
\left( (B+d_2)/2 + (B-3d_1)/2\right) + o(lB)$ vertices.

Let us now consider an optimal offline algorithm \opt. Whatever the strategy
of \alg is, one can show that there is always an unexplored vertex $u_1$
at depth at most $d_1+\Delta$ in $T_1$. We can assume that $u_1$ has
degree $2l$ and there are $2l-1$ distinct paths of length $B$
connected to it.

If \alg completely explores every tree $T_i$, $2\leq i \leq l$, then
\opt can send all agents to $u_1$ and then each agent explores one of
the paths below $u_1$. In this case, \opt explores at least $B +
(2l-2)\cdot (B-d_1 - \Delta) = 2l \cdot (B-d_1) - o(lB)$ vertices.

If \alg does not completely explore  any $T_i$, $2\leq i \leq l$, then
there exists an unexplored vertex $u_i$ in each tree $T_i$, $2\leq i
\leq l$, and we can assume that there is a path of length $B$
connected to it. In this case, \opt can send an agent to each $u_i$,
$2\leq i \leq l$ that can then explore the path below $u_i$. Then,
\opt can send the remaining $l$ agents to $u_1$ as in the previous
case, and each of these agent explores one of the paths below
$u_1$. In this case, \opt explores at least $lB +
(l-1)\cdot(B-d_1-\Delta) = l\cdot (2B - d_1) - o(lB)$ vertices. 

Since the algorithm can choose the best strategy among the two, we get 
\begin{align*}
  \frac{|\opt|}{|\alg|} \geq \min \left\{\frac{4l \cdot (B- d_1) - o(lB)}{l \cdot (B+d_2) + o(lB)}, \frac{2l \cdot (2B-d_1) - o(lB)}{l \cdot (2 B+d_2-3d_1)+o(lB) }  \right\}.
\end{align*}

In order to maximize the competitive ratio, we want to choose $d_2$ as
small as possible. Because of the initial assumptions on the parameter
in \eqref{assumptions-parameters}, we must have $2d_2+d_1 \geq B$ and
thus we choose $d_2 = (B-d_1)/2$. Additionally, dividing by $l$ and omitting the
terms that vanish as $B$ tends to infinity, we obtain

\begin{align*}
  \frac{|\opt|}{|\alg|} \geq \lim_{B \rightarrow \infty} \min \left\{\frac{8B-8d_1}{3B-d_1}, \frac{8B-4d_1}{5B-7d_1}  \right\}.
\end{align*}

By standard calculus, the competitive ratio is maximized
when the two terms on the right-hand side are equal and this is true
when $d_1 = (19 - 3 \sqrt{17}) B /26 $. These choices of $d_1$ and
$d_2$ satisfy~\eqref{assumptions-parameters} and the above lower bound
evaluates to $(5 + 3 \sqrt{17})/8\approx 2.17$.

We made several simplifying assumptions to get to this bound, but one
can show that no other strategy can beat the lower bound we
established. The challenge in the analysis is that the online
algorithm does not necessarily use one agent after the other, but the
agents may wait in between. This creates many different cases which
need to be grouped and analyzed.

\newpage
\bibliographystyle{plainurl}
\bibliography{treemax-arXiv}

\newpage
\appendix
\section{An Algorithm for Maximal Tree Exploration}
\label{app:algo}

\subsection{L-DFS and R-DFS}\label{app:LR-DFS}

\begin{algorithm}[h]
{
\KwIn{	Tree $T$, starting vertex~$u$ in $T$}
move on a shortest path to $u$ \;
\While{agent~$A$ has energy left and $T$ is not completely explored}{
	\eIf{the subtree below the current node is completely explored}{ 
	         traverse the edge with label 0 \;
	}{
                 traverse the unexplored edge with the smallest label $l>0$ \;
	}
}
}
\caption{L-DFS($T$,$u$)}
\end{algorithm}

\begin{algorithm}[h]
{
\KwIn{	Tree $T$, starting vertex~$u$ in $T$}
move on a shortest path to $u$ \;
\While{agent~$A$ has energy left and $T$ is not completely explored}{
	\eIf{the subtree below the current node is completely explored}{ 
	         traverse the edge with label 0 \;
	}{
                 traverse the unexplored edge with the largest label $l>0$ \;
	}
}
}
\caption{L-DFS($T$,$u$)}
\end{algorithm}

\subsection{\algDivEx is not $\rho$-competitive for $\rho < 3$}\label{app:tightanalysis} 

We construct now an example to show that the analysis of \algDivEx is
tight.  Let $k$, $d \in \mathbb{N}$, $d \geq 2$ and $B= 3 (d-1)$.  The
tree $T$ consists of a root~$r$ connected to $2k$ paths, of which $k$
have length $d$ and $k$ have length $B$, as illustrated in
Fig.~\ref{fig-tree-analysis-tight}. We assume that the edge labels of
the edges incident to the root are increasing from left to right,
i.e., for all $1\leq i \leq 2k-1$, the edge label of $\{r,v_i\}$ is
smaller than the label of $\{r,v_{i+1}\}$.  We further denote the path
$r,v_i,\ldots$ up to the leaf of the tree by $P_i$.

At the beginning of \algDivEx, one agent~$A_1$ performs an L-DFS and completely explores $P_1$ and explores $P_2$ up to depth $d-3$, overall 
exploring $2d-3$ vertices. The second agent~$A_2$ performs an R-DFS and completely explores the rightmost path $P_{2k}$ of length $B$, i.e., 
$B=3(d-1)$ vertices. From now on, in every iteration of the while loop, we have $\mathcal{T}=\{T\}$, $r_S=r$, $d(v_L)=d-2$ and thus 
\begin{align*}
d(v_L)-d(r_S)=d-2 \leq d-1 = 1/3 \cdot (B-d(r_S)).
\end{align*}
This means that, for $i\geq 3$, the agent~$A_i$ used in the iteration~$i-2$ of the outer while-loop, first moves to the unexplored vertex at 
depth $d-2$ on the path $P_{i-1}$, then finishes exploring this path, and runs out of energy at depth $d-3$ in $P_i$. Thus, $A_i$ explores exactly 
$d$ vertices. Overall, the number of vertices explored by the algorithm is therefore
\begin{align*}
2d-3 + 3(d-1) + (k-2) d = 5d - 6 + (k-2) d.
\end{align*}
The optimal offline algorithm sends one agent down each of the paths $P_{k+1}, \ldots, P_{2k}$ exploring $3 k (d-1)$ vertices. 
Hence, we obtain the following lower bound on the competitive ratio:
\begin{align*}
\frac{|\opt|}{|\alg|}= \frac{3 k (d-1)}{5d - 6 + (k-2) d} \xrightarrow{ d \to \infty,k \to \infty } 3. 
\end{align*}

\begin{figure}
\begin{center}
\usetikzlibrary{arrows,intersections}
\tikzstyle{gnode}=[circle,draw,minimum size=2em,scale=0.7]
\tikzstyle{lnode}=[scale=0.65]
\newcommand*{\xf}{0.5}
\newcommand*{\yf}{0.55}
\begin{tikzpicture}[scale=1]
\node[gnode] (r) at (0,0) {};
\node[lnode] (l0) at (0,0)  {$r$};	

\foreach \j in {-8,-6,-3,0,2,4,8} {
	\node[gnode] (a\j) at (\j*\xf,-1.5*\yf)  {};	
	\draw (r)--(\j*\xf,-0.7*\yf)--(a\j);
}
\node[lnode] (l1) at (-8*\xf,-1.5*\yf)  {$v_1$};	
\node[lnode] (l2) at (-6*\xf,-1.5*\yf)  {$v_2$};	
\node[lnode] (l3) at (-3*\xf,-1.5*\yf)  {$v_{i}$};	
\node[lnode] (l3) at (0*\xf,-1.5*\yf)  {$v_k$};	
\node[lnode] (l4) at (2*\xf,-1.5*\yf)  {$v_{k+1}$};	
\node[lnode] (l5) at (4*\xf,-1.5*\yf)  {$v_{k+2}$};	
\node[lnode] (l6) at (8*\xf,-1.5*\yf)  {$v_{2k}$};	

\node[scale=1] (l6) at (-10*\xf,-6*\yf)  {depth $d$};	
\node[scale=1] (l6) at (10*\xf,-8*\yf)  {depth $B$};	
\node[lnode] (l6) at (-3*\xf,-3*\yf)  {$A_i$};	

\foreach \j in {-8,-6,0} {
	\node[gnode] (b\j) at (\j*\xf,-6*\yf)  {};
	\draw[loosely dashed] (a\j)--(b\j);
}

\node[gnode] (v1) at (-3*\xf,-3*\yf)  {};
\node[gnode] (v2) at (-3*\xf,-4*\yf)  {};
\node[gnode] (v3) at (-3*\xf,-5*\yf)  {};
\node[gnode] (b-3) at (-3*\xf,-6*\yf)  {};
\draw[loosely dashed] (a-3)--(v1);
\draw (v1)--(v2);
\draw (v2)--(v3);
\draw (v3)--(b-3);

\foreach \j in {2,4,8} {
	\node[gnode] (b\j) at (\j*\xf,-8*\yf)  {};
	\draw[loosely dashed] (a\j)--(b\j);
}

\draw[loosely dotted] (-5*\xf,-1.5*\yf)--(-4*\xf,-1.5*\yf);
\draw[loosely dotted] (-2*\xf,-1.5*\yf)--(-1*\xf,-1.5*\yf);
\draw[loosely dotted] (5*\xf,-1.5*\yf)--(7*\xf,-1.5*\yf);

\draw[loosely dotted] (-5*\xf,-6*\yf)--(-4*\xf,-6*\yf);
\draw[loosely dotted] (-2*\xf,-6*\yf)--(-1*\xf,-6*\yf);
\draw[loosely dotted] (5*\xf,-8*\yf)--(7*\xf,-8*\yf);
\end{tikzpicture}
\caption{Instance showing that the analysis of \algDivEx is tight.}
\label{fig-tree-analysis-tight}
\end{center}
\end{figure}

\section{A General Lower Bound on the Competitive Ratio}
\label{app:lb}

For every vertex $v$ in $T$, we say that $v$ is \emph{explored} by an agent $A$, if $A$ is the first agent visiting $v$. If $v_i^{(2)}$ is defined, then we say that every vertex on the path from $v_i^{(1)}$ to $v_i^{(2)}$ is explored by the first agent $A_1$, which enters $T_i$ and has not visited any other tree $T_j \neq T_i$ before. It may be even the case that $A_1$ never visits these vertices, but
to simplify the analysis, we will still attribute them to~$A_1$. 

For $i \in \{1,\ldots,l\}$, we let $\agentset_{1,i}$ be the set of agents for which $T_i$ is the first tree they visit and let $\agentset_{2,i}$ be 
the set of agents for which $T_i$ is the second tree they visit, i.e., every agent $A \in \agentset_{2,i}$ has visited a subtree distinct from $T_i$ before.  
Note that an agent can visit at most two subtrees as 
\begin{align}
5 \cdot  d_1 \geq d_1 + 4 \cdot \tfrac{3}{5} d_2 > d_1 + 2 \cdot d_2 \geq B \label{eq-upper-bound-d1-B}
\end{align}
by our assumptions on the parameters in~\eqref{assumptions-parameters}. 
Therefore an agent $A \in  \agentset$ can be contained in one set  $\agentset_{1,i}$ and possible in some other set $\agentset_{2,j}$ for $j  \in \{1,\ldots, l\}\setminus 
\{i\}$.
For every agent $A \in \agentset$  we let $B_A$ denote the remaining energy when $A$ enters a second subtree. If $A$ only enters at most one of the subtrees $T_1,\ldots, T_l$, we set $B_A=0$.
We now establish the following important properties for the number of vertices that the agents explore.

\begin{lemma}\label{lem-properties-lower-bound-instance}
Let $T_i$ be a subtree of $T$ as defined above. 
\begin{enumerate}
\item $B_A \leq B - 3 d_1 $ for all $A \in \agentset$. \label{lem-properties-lower-bound-instance-1}
\item If Case~2b or Case~2c occurs, then the first agent $A_1$ in
  $\agentset_{1,i}$ entering $T_i$ does not visit any other subtree,
  i.e., $B_{A_1} = 0$.\label{lem-properties-lower-bound-instance-2}
\item Every agent $A \in \agentset_{2,i}$ explores at
  most $B_A/2+2$ vertices in $T_i$. \label{lem-properties-lower-bound-instance-3}
\item The first agent $A_1$ in $\agentset_{1,i}$ entering $T_i$
  explores at most $(B +d_2)/2-d_1+2\Delta$ vertices.\label{lem-properties-lower-bound-instance-4}
\item If $|\agentset_{1,i}|\leq 1$, then the agents in  $\agentset_{1,i} \cup \agentset_{2,i}$
visit strictly less than $N_i$ vertices in~$T_i$.  \label{lem-properties-lower-bound-instance-5}
\item If the adaptive tree below $\smash{v_i^{(1)}}$ is active and the budget $N_i$ is not depleted, then there is an unexplored vertex in $T_i$ at depth at most $d_1 + \Delta$.\label{lem-properties-lower-bound-instance-6}
\end{enumerate}
\end{lemma}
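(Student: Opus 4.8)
The plan is to establish the six statements in the given order, since the later ones build on the earlier: parts~\ref{lem-properties-lower-bound-instance-1} and~\ref{lem-properties-lower-bound-instance-2} are energy--geometry arguments exploiting that each $v_i^{(1)}$ sits at depth $d_1$ and is a cut vertex separating $T_i$ from $r$; parts~\ref{lem-properties-lower-bound-instance-3} and~\ref{lem-properties-lower-bound-instance-4} are exploration upper bounds obtained from a parity (displacement) argument combined with the adversary's stopping rules; part~\ref{lem-properties-lower-bound-instance-5} sums these two bounds against the way the budget $N_i$ accumulates; and part~\ref{lem-properties-lower-bound-instance-6} is a pigeonhole argument driven by the choice of $\Delta$. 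The tool I would isolate first is the \emph{displacement bound}: a walk of length $L$ (edge traversals) in a tree starting at a vertex $u$ and ending at distance $\delta$ from $u$ visits at most $1 + (L+\delta)/2$ distinct vertices, because the edges on the $u$-to-endpoint path are traversed an odd number of times and all others an even (hence at least $2$) number of times.

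For part~\ref{lem-properties-lower-bound-instance-1}, any agent entering a second subtree must descend from $r$ to its first root ($d_1$ edges), climb back to $r$ ($d_1$), and descend to the second root ($d_1$), so it has already spent at least $3d_1$, giving $B_A \leq B - 3d_1$. For part~\ref{lem-properties-lower-bound-instance-2}, I would use that when Case~2b or~2c is triggered, Case~2a has failed, so $A_1$ has explored $x_i > (d_1+d_2)/2$ vertices inside $T_i$. Since $v_i^{(1)}$ is the only gateway between $T_i$ and $r$, the part of $A_1$'s trajectory lying in $T_i$ is a closed walk based at $v_i^{(1)}$, which traverses each edge of the explored subtree an even number of times and hence costs at least $2(x_i-1) > d_1 + d_2 - 2$. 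Reaching a second root would add $3d_1$ as above, for a total exceeding $4d_1 + d_2 - 2$; since $d_2 \leq \tfrac{5}{3}d_1$ gives $3d_1 - d_2 > 2$, this already exceeds $d_1 + 2d_2 \geq B$ by~\eqref{assumptions-parameters}. Thus $A_1$ never enters a second subtree and $B_{A_1}=0$.

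Parts~\ref{lem-properties-lower-bound-instance-3} and~\ref{lem-properties-lower-bound-instance-4} are where I expect the real work, and the delicate bookkeeping there is the main obstacle. For part~\ref{lem-properties-lower-bound-instance-3}, agent $A \in \agentset_{2,i}$ reaches $v_i^{(1)}$ with exactly $B_A$ energy, so its walk inside $T_i$ has length at most $B_A$; the Case~3 stopping rules are designed so that whenever $A$ is at depth $d>d_1$ off the $v_i^{(2)}$-branch, or at depth $d>d_2$ below $v_i^{(2)}$, it is fed a new vertex only while retaining enough energy to climb back to $v_i^{(1)}$ or $v_i^{(2)}$ respectively. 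Consequently every newly explored edge except a constant number of final ``unpaid'' descents is traversed at least twice, and the displacement bound yields at most $B_A/2 + 2$ new vertices. For part~\ref{lem-properties-lower-bound-instance-4}, I apply the displacement bound to $A_1$'s entire walk of length at most $B$: Case~1's stopping rule forces any descent below depth $d_2$ to retain return-energy to $v_i^{(2)}$, so the effective endpoint depth is $d_2$, giving at most $1+(B+d_2)/2$ visited vertices; subtracting the $d_1$ approach vertices strictly above $T_i$ leaves at most $(B+d_2)/2 - d_1 + 1 \leq (B+d_2)/2 - d_1 + 2\Delta$. The hard part is turning ``retains enough energy to climb back'' into the precise additive constants $+2$ and $+2\Delta$ while handling the case that an agent ends deep and the $\Delta$ leaf-paths hanging at $v_i^{(1)}$ and $v_i^{(2)}$; this needs a careful classification of the final descent and is the step I would write out most fully.

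Finally, for part~\ref{lem-properties-lower-bound-instance-5} I would sum the two bounds just obtained: when $|\agentset_{1,i}|\leq 1$, every agent visiting $T_i$ is either the unique first agent (bounded by part~\ref{lem-properties-lower-bound-instance-4}) or lies in $\agentset_{2,i}$ (bounded by part~\ref{lem-properties-lower-bound-instance-3}), and by construction $N_i$ starts at $2$ and is incremented by precisely $(B+d_2)/2-d_1+2\Delta$ for the former and by $B_A/2+2$ for each of the latter. Hence the total number of vertices visited is at most $N_i - 2 < N_i$; this is exactly why the hypothesis $|\agentset_{1,i}|\leq 1$ is needed, since a second agent of $\agentset_{1,i}$ (Case~2) would explore vertices without raising $N_i$. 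For part~\ref{lem-properties-lower-bound-instance-6}, I argue by contradiction: if all $\Delta$ paths below $v_i^{(1)}$ were explored past depth $d_1+\Delta$, together they would contain at least $\Delta^2$ explored non-leaf vertices, whereas the choice $\Delta = \lceil\sqrt{2lB}\rceil + 2l$ makes $\Delta^2 > N_i$ (every $N_i$ is $O(lB)$ by parts~\ref{lem-properties-lower-bound-instance-1},~\ref{lem-properties-lower-bound-instance-3},~\ref{lem-properties-lower-bound-instance-4} together with $|\agentset|=2l-1$), contradicting that $N_i$ is not depleted. Pigeonhole then gives a path with fewer than $\Delta$ explored vertices, whose active frontier is an unexplored vertex at depth at most $d_1+\Delta$.
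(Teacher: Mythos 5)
Your parts 1, 2 and 5 are correct and follow essentially the paper's own arguments, and your overall strategy (displacement/parity bounds plus the adversary's stopping rules) is the right one. But there are two genuine gaps. The first is in part 4: the lemma's word ``explores'' includes the vertices on the path from $v_i^{(1)}$ to $v_i^{(2)}$, which are \emph{attributed} to $A_1$ even when $A_1$ never visits them. This situation arises exactly in Case~2b, when $v_i^{(2)}$ is defined as a vertex at depth $d_2$ that $A_1$ could still reach with its remaining energy but has not visited. Your displacement bound counts only visited vertices, giving $(B+d_2)/2-d_1+1$; the attributed path adds up to $d_2-d_1$ further vertices, and $d_2-d_1$ can be of order $B$ (the final parameter choice has $d_2-d_1\approx 0.11B$), so it cannot be absorbed into the slack ``$+1\leq+2\Delta$''. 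The paper closes this case with a comparison argument: if $A_1$ had spent its remaining energy walking to $v_i^{(2)}$, the bound $(B+d_2)/2-d_1+2$ would hold \emph{including} the attributed path; since the adaptive tree at $v_i^{(1)}$ has become passive, the real $A_1$ can explore at most $\Delta$ vertices beyond that hypothetical walk, giving $+2+\Delta\leq +2\Delta$. This is the only reason the $2\Delta$ appears in the statement; a proof that treats $2\Delta$ as pure slack has not proved this case. (Part 3 has the same flavor of deferred bookkeeping, which you acknowledge; there the paper's key device is that $B$, $d_1$, $d_2$ even forces a moment at which the remaining energy is exactly $d-d_1$, after which at most one more vertex can be explored.)

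The second gap is in part 6. Your pigeonhole concludes that a path below $v_i^{(1)}$ with fewer than $\Delta$ explored vertices has an unexplored ``active frontier''. That is false: the stopping rules in Cases 1 and 3 can terminate a path at any depth $d>d_1$ (``we stop presenting new vertices on the current path'', i.e., the last vertex becomes a leaf), so a path can be short, completely explored, and offer no unexplored vertex at depth at most $d_1+\Delta$ --- even while the adaptive tree as a whole is active. The paper handles this by noting that each of the $2l-1$ agents can dead-end at most one path (after triggering the rule it has only enough energy to return to $v_i^{(1)}$), so at least $\Delta-(2l-1)=\lceil\sqrt{2lB}\,\rceil+1$ paths must genuinely be explored to depth $d_1+\Delta$, forcing at least $\Delta\cdot\lceil\sqrt{2lB}\,\rceil\geq 2lB$ explored vertices against a total energy of only $(2l-1)B$. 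The additive $2l$ in the definition $\Delta=\lceil\sqrt{2lB}\,\rceil+2l$ exists precisely to absorb these dead-ended paths; your argument never uses it, which is a sign the accounting is incomplete.
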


\begin{proof}
  
\begin{enumerate}
\item Note that we have $B- 3 d_1 > 0$ by our initial assumptions on the parameters in~\eqref{assumptions-parameters} and thus the claim trivially holds if $A$ visits at most one of
  the subtrees $T_1,\ldots, T_l$, i.e., if $B_A = 0$.  Now, consider
  an agent $A \in \agentset$ visiting two subtrees and assume without
  loss of generality, that $A$ first visits $T_1$ and afterwards
  enters $T_2$ with remaining energy $B_A$. To reach $T_1$ the agent
  needs to traverse $d_1$ edges. Afterwards to reach $T_2$, the agent $A$ needs   to traverse another $2d_1$ edges. Thus, we must have $B_A \leq B - 3   d_1$.

\item In both cases, agent $A_1$ has explored more than $(d_1+d_2)/2$
  vertices in $T_i$. If $A_1$ visits another subtree it traverses
  every edge in $T_i$ an even number of times and therefore needs at least $d_1+d_2$ energy to explore more than $(d_1+d_2)/2$
  vertices. Moreover, $3 d_1$ energy is needed to first reach $T_i$
  and then another subtree. As $3 d_1 + (d_1+ d_2) > 5d_1 \geq B$ by~\eqref{assumptions-parameters} and~\eqref{eq-upper-bound-d1-B}, $A_1$ cannot visit another subtree.

\item By definition, the remaining energy of the agent $A$ when
 entering $T_i$ is $B_A$. If the final position of $A$ is not in $T_i$, then it traverses every edge in $T_i$ an even number of times and in particular $A$ traverses at most $B_A/2$ edges in $T_i$. These can be incident to at most $B_A/2 + 1$ vertices, which yields the claim.
 
Now, consider the case that the final position of $A$ is below $\smash{v_i^{(1)}}$ and not below 
$\smash{v_i^{(2)}}$ and not on the path between $\smash{v_i^{(1)}}$ and $\smash{v_i^{(2)}}$.
This means that at some point $A$ must have visited a vertex $v$ at depth $d$ with remaining energy exactly $d-d_1$. Recall that $B$ and $d_1$ are even, hence $B_A$ is even and this must happen at some point. Then $A$ has exactly enough energy left to move to $\smash{v_i^{(1)}}$ and, in particular, $A$ cannot reach any other path below $\smash{v_i^{(1)}}$. If $v$ is explored by $A$, then $v$ has no new unexplored neighbor and we can simply assume that $A$ returns to  $\smash{v_i^{(1)}}$ as this does not change the number of neighbors it explores. In this case $A$ has traversed every edge in $T_i$ an even number of times and therefore can have explored at most  $B_A/2 + 1$ vertices. If $v$ is not explored by $A$, then $A$ can only explore at most one more vertex after visiting $v$ with energy $d-d_1$, because the current path ends immediately when $A$ explores a new vertex. Compared to the case that $v$ is explored by $A$, agent $A$ only explores at most one additional vertex in this case so that we can bound the total number of vertices explored by $A$ by  $B_A/2 + 2$.

Next consider the case that the final position of $A$ is on the path between  $\smash{v_i^{(1)}}$ and $\smash{v_i^{(2)}}$. In particular, this implies that $\smash{v_i^{(2)}}$ is defined and all vertices on the path between $\smash{v_i^{(1)}}$ and $\smash{v_i^{(2)}}$ are attributed to $A_1$. Note that then all edges that are not on that path, must be traversed an even number of times by $A$ and we therefore again
obtain that $A$ can explore at most $B_A/2 + 1$ vertices, which yields the claim.

Finally, the case the final position of $A$ is below $\smash{v_i^{(2)}}$ is completely analogous to the
case that the final position is  below $\smash{v_i^{(1)}}$  as all vertices on the path from  $\smash{v_i^{(1)}}$ to  $\smash{v_i^{(2)}}$ are attributed to $A_1$.

\item
Let $A_1$ be the first agent entering $T_i$. If $A_1$ visits another subtree $T_j \neq T_i$ afterwards, then $A_1$ traverses every edge in $T_i$ an even number of times and needs $3d_1$ energy to first reach $T_i$ and afterwards $T_j$.
Overall, $A_1$ can therefore explore at most $(B-3d_1)/2$ vertices in $T_i$ and
as $(B +d_2)/2-d_1+2\Delta \geq (B-3d_1)/2$ this yields the claim. 

 From now on, we can therefore assume that $A_1$ only visits the subtree $T_i$. 
  The energy that $A_1$ spends in $T_i$ is at most $B-d_1$, as $B-d_1$ is the maximum energy possible when entering $T_i$. If the final position of 
the agent $A_1$ is at depth $d_2$ or above, then it traverses at most $d_2-d_1$ edges in $T_i$ once using $d_2-d_1$ energy and exploring at most 
$d_2-d_1+1$ vertices. All other edges in $T_i$ traversed by $A_1$ must be traversed at least twice which means there is at most one explored vertex for every two energy used. Overall, the 
number of explored vertices is thus bounded by
\begin{align*}
(d_2-d_1+1) + \frac{B-d_1 - (d_2-d_1)}{2} = \frac{B+d_2}{2} -d_1 + 1,
\end{align*}
if the final position of $A_1$ is at depth $d_2$ or above. If the final position of $A_1$ is below $d_2$, there has to be a vertex $v$ at depth $d$ 
visited by $A_1$ such that the remaining energy of $A_1$ when visiting $v$ is exactly $d-d_2$ (recall that $d_2$ and $B$ are even by assumption). 
If $v$ is explored by $A_1$, then $v$ is the last vertex that $A_1$ explores because $v$ then is a vertex without further neighbors and $A_1$ cannot reach another path below $\smash{v_i^{(1)}}$ or $\smash{v_i^{(2)}}$. If $v$ has been already explored by another agent, then $A_1$ can only explore one more additional vertex as the path also ends immediately if $A_1$ explores a vertex.
 If $A_1$ after visiting $v$ with remaining energy $d-d_2$, would directly move up towards $\smash{v_i^{(1)}}$, its final position would be at depth $d_2$ and by the argument above $A_1$ could explore at most $(B+d_2)/2 -d_1 + 1$ vertices. As $A_1$ can explore only at most one more vertex, as we just showed, the total number of vertices explored by $A_1$ is bounded by $(B+d_2)/2 -d_1 + 2$ in this case. 

However, in  Case~2b, it can happen that $\smash{v_i^{(2)}}$  is defined as it can be reached by $A_1$ with its remaining energy when $A_2$ enters $T_i$, but $A_1$ does not visit $\smash{v_i^{(2)}}$. Recall that we always attribute the vertices on the path between $\smash{v_i^{(1)}}$ and $\smash{v_i^{(2)}}$  to $A_1$, even if $A_1$ never visits them.
 If $A_1$ visits $\smash{v_i^{(2)}}$, then it visits all vertices on the path
  between $\smash{v_i^{(1)}}$ and $\smash{v_i^{(2)}}$  and
by the argument above the number of vertices visited by $A_1$ is bounded by $(B+d_2)/2 -d_1 + 2$. 
As the adaptive tree at $\smash{v_i^{(1)}}$ becomes passive when $A_2$ enters $T_i$, $A_1$ can from then on only explore $\Delta$ vertices which are not on the path between $\smash{v_i^{(1)}}$ and $\smash{v_i^{(2)}}$ or below $\smash{v_i^{(2)}}$. This means compared to the case that $A_1$ visits  $\smash{v_i^{(2)}}$, $A_1$ can only visit additional $\Delta$ vertices and therefore the overall number of vertices explored by $A_1$ is bounded by $(B+d_2)/2 -d_1 + 2\Delta$ in this case as $2+\Delta \leq 2 \Delta$. This yields the claim.
  
  \item 
By Statement~\ref{lem-properties-lower-bound-instance-3} of the lemma, every agent $A\in \agentset_{2,i}$ entering $T_i$ explores at most $B_A/2+2$ vertices and the budget $N_i$ is also increased by this value 
when $A$ enters $T_i$. Thus, if $\agentset_{1,i} = \emptyset$, the number of vertices explored in $T_i$ will always be less than the budget, as $N_i$ 
is initially $2$. 
Now assume, there is one agent $A_1\in \agentset_{1,i}$ entering $T_i$. By Case~1 in the construction of the lower bound, the budget $N_i$ is increased by $(B+d_2)/2-d_1+2\Delta$ and by 
Statement~\ref{lem-properties-lower-bound-instance-4} of the lemma, $A_1$ also explores at most $(B+d_2)/2-d_1+2\Delta$ vertices in $T_i$. Thus the budget $N_i$, which is initially $2$, is also larger than the number of explored vertices in $T_i$ in this case.   
  
\item  
Suppose, for the sake of contradiction, that the budget $N_i$ is not depleted and the adaptive tree below  $\smash{v_i^{(1)}}$ is active,
but there is no unexplored vertex at depth at most $d_1+ \Delta$ in $T_i$.
Recall that there are $\Delta$ path below $\smash{v_i^{(1)}}$ and $\Delta=\ceil{\sqrt{2\cdot  l \cdot B}}+2l$. We have $2l-1$ agents and each agent 
can be responsible for at most one path to be fully explored and end because the agent has remaining energy $\leq d-d_1$ at depth $d$. If all other 
$\ceil{\sqrt{2\cdot l \cdot B}}+1$ paths are fully explored up to depth $\Delta$, then these path contain at least $\Delta \cdot \ceil{\sqrt{2\cdot  
l \cdot B}} \geq 2\cdot  l \cdot B$ vertices. But all agents together only have $(2\cdot  l -1) \cdot B$ energy and hence cannot visit all these vertices. This is a contradiction.
\end{enumerate} 
\end{proof}
 
 We will say that Case~2a occurs in $T_i$ if $|\agentset_{1,i}|\geq 2$ and Case~2a occurs when the second agent $A_2 \in \agentset_{1,i}$ enters 
$T_i$. Analogously for Case~2b and Case~2c. We partition the subtrees into the following three sets:
\begin{align*}
M_0 &:= \{ i \mid B_A>0 \text{ for all } A \in \agentset_{1,i} \text{ or Case~2a occurs in }T_i \}, \\
M_1 &:= \{ i \mid  T_i \text{ is not completely explored, } \exists A \in A_{1,i} \text{ with } B_A=0 \\
& \quad \quad \quad \   \text{ and Case~2a does not occur} \}, \\
M_2 &:= \{ i \mid  T_i \text{ is completely explored and Case~2b or Case~2c occurs in } T_i \}.
\end{align*}

\begin{lemma}\label{lem-ineq-lower-bound}
Let $T_i$ be a subtree of $T$, $
|T_i|$ be the number of vertices explored in $T_i$ by \alg and $M_0$, $M_1$ and $M_2$ as defined above. 
\begin{enumerate}
\item We have $M_0 \cup M_1 \cup M_2 = \{1,\ldots, l\}$ and $M_i \cap M_j = \emptyset$ for all $i,j \in \{0,1,2\}$ with $i \neq j$.
\item For every $i=1,\ldots,l$, we have 
\begin{align}
 |T_i| \leq  \frac{B+d_2}{2} - d_1 +  6 \Delta + \sum_{A \in \agentset_{2,i}} \frac{B_A}{2}.   \label{bound-ti-general}
\end{align} 
\item If $i \in M_0$, then 
\begin{align}
 |T_i| & \leq  \frac{B+d_2}{2} - d_1 + 4 \Delta +  (|\agentset_{1,i}| -2) \cdot \frac{B- 3 d_1}{2} \notag \\
 & \quad +  \sum_{A \in \agentset_{2,i}} \frac{B_A}{2} - \sum_{A \in \agentset_{1,i}} \frac{B_A}{2}  \label{bound-ti-m0}
\end{align}  
\item If $i \in M_1$, then 
\begin{align}
\sum_{A \in \agentset_{1,i}} B_A \leq (|\agentset_{1,i}|-1)\cdot (B-3 d_1).\label{bound-m1}
\end{align} 
\item If $i \in M_2$, then 
\begin{align}
\sum_{A \in \agentset_{1,i}} B_A \leq (|\agentset_{1,i}|-2)\cdot (B-3 d_1).\label{bound-m2}
\end{align}  
\end{enumerate}
\end{lemma}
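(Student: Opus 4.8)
The plan is to prove the five items in sequence, using Lemma~\ref{lem-properties-lower-bound-instance} as a toolbox. Write $A_1$ for the first agent that enters $T_i$ with $T_i$ as its first subtree, and let $e_A$ denote the number of vertices an agent $A$ explores in $T_i$, so that $|T_i|=\sum_A e_A$. For the partition (first item), disjointness is essentially syntactic: the definitions of $M_1$ and $M_2$ both require Case~2a not to occur, and by the second part of Lemma~\ref{lem-properties-lower-bound-instance} the occurrence of Case~2b or~2c forces $B_{A_1}=0$, so in those subtrees the clause ``$B_A>0$ for all $A\in\agentset_{1,i}$'' of $M_0$ fails; moreover $M_1$ and $M_2$ are separated by whether $T_i$ is completely explored. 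For covering, I distinguish whether Case~2a occurs (then $i\in M_0$) and otherwise whether some $A\in\agentset_{1,i}$ has $B_A=0$. The one configuration to exclude is a completely explored $T_i$ with $|\agentset_{1,i}|\le 1$, and the fifth and sixth parts of Lemma~\ref{lem-properties-lower-bound-instance} rule this out: no Case~2 fires, the adaptive subtree below $v_i^{(1)}$ stays active, the budget is not depleted, and an unexplored vertex survives. The degenerate case $\agentset_{1,i}=\emptyset$ falls into $M_0$ by the vacuous clause.

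\textbf{Second item.} Here I would charge vertices to the budget rather than to individual agents. Every non-leaf vertex explored in $T_i$ is counted against $N_i$, whose final value is $2$ plus the single Case~1 increment $(B+d_2)/2-d_1+2\Delta$ (present exactly when $\agentset_{1,i}\ne\emptyset$) plus $\sum_{A\in\agentset_{2,i}}(B_A/2+2)$ from the Case~3 increments. Since each adaptive subtree is a union of $\Delta$ simple paths, $T_i$ has at most $2\Delta$ leaves, whence $|T_i|\le N_i+2\Delta$. Using $|\agentset_{2,i}|\le 2l-1$ and $\Delta\ge 2l$ to absorb the constant $2+2|\agentset_{2,i}|$ into one more $2\Delta$ yields the claimed bound.

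\textbf{Third and fourth items.} The third item needs a per-agent accounting, and the key structural fact is that $M_0$ excludes Cases~2b and~2c, leaving only two scenarios: either $|\agentset_{1,i}|\le 1$ with $B_{A_1}>0$, or Case~2a occurs, after which both adaptive subtrees are passive and every agent arriving after the second sees only the $\le 2\Delta$ leaves. The tight accounting is the following: for $A_1$ I would use the Case~2a cap $e_{A_1}\le (d_1+d_2)/2+2\Delta$ when $A_1$ stays and the even-traversal bound $e_{A_1}\le (B-3d_1-B_{A_1})/2+1$ when $B_{A_1}>0$; the second agent contributes only the $\le 2\Delta$ leaves; every further agent of $\agentset_{1,i}$ contributes at most $(B-3d_1)/2-B_A/2$ (even traversal if $B_A>0$, else $2\Delta\le (B-3d_1)/2$); and each $A\in\agentset_{2,i}$ contributes at most $B_A/2+2$. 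Summing these and absorbing the $O(\Delta)+O(l)$ slack into $4\Delta$ telescopes to the stated bound with its $-\sum_{\agentset_{1,i}}B_A/2$ correction; several of these estimates are tight, so the required inequalities (in particular $4\Delta\le B-3d_1$ and the pairing of the $A_1$-cap against the second agent's leaf bound) must be verified against the parameter relations~\eqref{assumptions-parameters}. The fourth item is then immediate: every $A\in\agentset_{1,i}$ satisfies $B_A\le B-3d_1$ by the first part of Lemma~\ref{lem-properties-lower-bound-instance}, and the definition of $M_1$ supplies one agent with $B_A=0$, so $\sum_{\agentset_{1,i}}B_A\le(|\agentset_{1,i}|-1)(B-3d_1)$.

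\textbf{Fifth item and main obstacle.} The delicate point, and what I expect to be the main obstacle, is the fifth item, which requires \emph{two} agents of $\agentset_{1,i}$ with $B_A=0$. One is $A_1$ by the second part of Lemma~\ref{lem-properties-lower-bound-instance}. For the second I would exploit the geometric inequality $2d_2+d_1\ge B$ coming from~\eqref{assumptions-parameters}: any agent reaching $v_i^{(2)}$ at depth $d_2$ can never enter a second subtree, and the stopping rules of Cases~2b and~2c force an agent that drives a deep path to its end to finish with exactly enough energy to climb back to $v_i^{(2)}$ or $v_i^{(1)}$, i.e.\ with $B_A=0$. Since $i\in M_2$ means $T_i$ is completely explored, the active adaptive subtree (below $v_i^{(2)}$ in Case~2b, below $v_i^{(1)}$ in Case~2c) must be exhausted; combining this with the fifth part of Lemma~\ref{lem-properties-lower-bound-instance} (a single first-agent cannot deplete the budget) I would argue that a second agent of $\agentset_{1,i}$ is forced to spend all of its energy inside $T_i$. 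Making this forcing argument watertight---in particular ensuring the second committed agent genuinely lies in $\agentset_{1,i}$ rather than merely in $\agentset_{2,i}$---is the crux, after which $\sum_{\agentset_{1,i}}B_A\le(|\agentset_{1,i}|-2)(B-3d_1)$ follows exactly as in the fourth item.
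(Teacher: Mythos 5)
Items 1, 2, and 4 of your plan follow the paper's route and are sound: the partition argument via Statements 2, 5, and 6 of Lemma~\ref{lem-properties-lower-bound-instance}, the budget-charging bound $|T_i|\leq N_i+2\Delta$ for \eqref{bound-ti-general}, and the one-line derivation of \eqref{bound-m1} are all essentially what the paper does. Item 3 also matches the paper's two-scenario split (all $B_A>0$ versus Case~2a), and your per-agent accounting, modulo small slack bookkeeping, telescopes to \eqref{bound-ti-m0} as claimed.

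The genuine gap is item 5, and specifically Case~2c. Your plan aims to exhibit a \emph{second} agent of $\agentset_{1,i}$ with $B_A=0$, and you concede the forcing argument is not watertight; in fact, for Case~2c this is not merely hard to tighten, it is the wrong target. Complete exploration of $T_i$ does not force a second agent of $\agentset_{1,i}$ to end its run inside $T_i$: once the budget $N_i$ is depleted, both adaptive trees become passive and every unexplored vertex is a leaf, and these leaves can be swept up by agents that afterwards still retreat and leave $T_i$ with positive remaining energy. Your stopping-rule mechanism (``a path ends only when the agent driving it has exactly enough energy to climb back'') covers only paths terminated by the low-energy rule, not paths capped by budget depletion. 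The paper's proof of \eqref{bound-m2} in Case~2c therefore proves a dichotomy instead: either some agent of $\agentset_{1,i}\setminus\{A_1\}$ has $B_A=0$, or these agents together visit at least $(B-3d_1)/2$ distinct vertices of $T_i$; in the latter case, since each of them leaves $T_i$ and hence traverses every edge of $T_i$ an even number of times, they collectively spend at least $B-3d_1$ energy inside $T_i$, which lowers $\sum_{A\in\agentset_{1,i}}B_A$ by exactly the amount one additional zero-energy agent would. Establishing the visit lower bound is itself the delicate part: one needs (i) that in Case~2c agent $A_1$ never reaches depth $d_2+1$, so it completely explores at most one path of at most $d_2-d_1+1$ vertices, forcing the remaining budgeted vertices onto the other agents, and (ii) the discounted counting inequality $e_A+n_A/2\leq B_A/2+2$ for each $A\in\agentset_{2,i}$, in which re-visits of vertices already explored by $A_1$ count at half rate --- without this, the agents of $\agentset_{2,i}$ could apparently absorb the required visits at no energy cost and the contradiction never materializes. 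None of this machinery appears in your outline. (Your Case~2b sketch, by contrast, does match the paper: budget accounting shows the agents of $\agentset_{1,i}\setminus\{A_1\}$ must explore at least $2\Delta+2$ vertices, hence one of them reaches $v_i^{(2)}$, and $d_1+2d_2\geq B$ from \eqref{assumptions-parameters} pins that agent inside $T_i$.)
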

 
\begin{proof}
  
 \begin{enumerate}
 \item For the first part of the statement, let $i \in \{1,\ldots,
   l\}\setminus (M_0 \cup M_1\}$, and note that there exists $A \in
   \agentset_{1,i}$ with $B_A = 0$, Case~2a does not occur in $T_i$, and
   $T_i$ is completely explored. By
   Statements~\ref{lem-properties-lower-bound-instance-5}
   and~\ref{lem-properties-lower-bound-instance-6} of
   Lemma~\ref{lem-properties-lower-bound-instance}, we have
   $|\agentset_{1,i}| \geq 2$. Consequently, since Case~2a does not occur in
   $T_i$, necessarily Case~2b or Case~2c occurs in $T_i$ and $i \in M_2$.

We obviously have $M_0 \cap M_1 = \emptyset$ and $M_1 \cap M_2 = \emptyset$. By Statement~\ref{lem-properties-lower-bound-instance-2} of Lemma~\ref{lem-properties-lower-bound-instance}, $B_{A_1}=0$ if Case~2b or Case~2c occurs and thus also $M_0 \cap M_2 = \emptyset$.
\item  The budget $N_i$ of the tree $T_i$, which is initially 2, satisfies
\begin{align*}
 N_i  & \leq 2 + \frac{B+d_2}{2} - d_1 +2 \Delta +   \sum_{A \in \agentset_{2,i}} \left(\frac{B_A}{2}+2 \right)  \\
 & \leq \frac{B+d_2}{2} - d_1 + 4 \Delta + \sum_{A \in \agentset_{2,i}} \frac{B_A}{2},
\end{align*}
where we used $2+ 2 |\agentset_{2,i}| \leq 4 l +2 \leq 2 \Delta$. Since
$T_i$ has at most $2\Delta -1$ leaves, and since the number of
vertices explored in $T_i$ that are not leaves is at most $N_i$, we
have $|T_i| \leq N_i + 2\Delta$. This yields the claim.

 \item 
First we show the claim for the case that $B_A>0$ for all $A \in
\agentset_{1,i}$.  This means that every agent $A \in \agentset_{1,i}$
also visits a second subtree. As $3 d_1$ energy is spent to reach $T_i$ and
afterwards the second subtree and $A$ has still $B_A$ energy left when
entering the second subtree, at most $B-3 d_1 - B_A$ energy is spent
in $T_i$. As every edge in $T_i$ is traversed an even number of times, at most $(B-3 d_1 - B_A)/2$ vertices are explored by $A$ in $T_i$ for all $A \in
\agentset_{1,i}$. Moreover, every agent $A \in \agentset_{2,i}$
explores at most $B_A/2+2$ vertices in $T_i$ by
Lemma~\ref{lem-properties-lower-bound-instance}. Additionally using $2
|\agentset_{2,i}| \leq 2 \Delta$, we thus have
\begin{align*}
|T_i| & \leq \sum_{A \in \agentset_{1,i}} \frac{B-3d_1-B_A}{2} + \sum_{A \in \agentset_{2,i}} \left( \frac{B_A}{2}+2 \right) \\
& = | \agentset_{1,i}| \cdot  \frac{B-3d_1}{2}  + \sum_{A \in \agentset_{2,i}} \frac{B_A}{2} -   \sum_{A \in \agentset_{1,i}} \frac{B_A}{2}+ 2 \Delta.
\end{align*}
We obtain the claim using $ (B+ d_2)/2 - d_1 \geq 2 \cdot (B-3d_1)/2$
as $d_2>d_1$ and $5 d_1 >B$ by~\eqref{assumptions-parameters} and~\eqref{eq-upper-bound-d1-B}.

Now assume Case~2a occurs and let $A_1 \in \agentset_{1,i}$ be the first agent entering $T_i$ and $A_2 \in \agentset_{1,i}$ the second agent 
entering $T_i$. As Case~2a occurs, $A_1$ explores at most $(d_1+d_2)/2$ vertices in $T_i$. If $B_{A_1}>0$, i.e., $A_1$ also enters a second tree, we 
can even bound the number of vertices explored by $A_1$ in $T_i$ by $(B-3 d_1 - B_{A_1})/2$. We have $(d_1+d_2)/2 >(B-3 d_1)/2$ as $d_2>d_1$ and $5 
d_1 >B$ by~\eqref{assumptions-parameters} and~\eqref{eq-upper-bound-d1-B}. Therefore, we can both for $B_{A_1}=0$ and for $B_{A_1}>0$ bound the number of vertices explored by $A_1$ until $A_2$ enters $T_i$ by $(d_1+d_2 - 
B_{A_1})/2$. 
As soon as $A_2$ enters $T_i$ all agents together can only explore the unexplored leaves, i.e., at most $2 \Delta$ vertices.
  Moreover, every agent $A \in \agentset_{2,i}$ explores at most $B_A/2+2$ vertices in $T_i$ by Lemma~\ref{lem-properties-lower-bound-instance}. 
Overall, we hence have
\begin{align*}
|T_i|  & \leq \frac{d_1 + d_2-B_{A_1}}{2} + 2 \Delta + \sum_{A \in \agentset_{2,i}} \left(\frac{B_A}{2}+2\right) \\
& \leq  \frac{d_1 + d_2-B_{A_1}}{2} + 4 \Delta + \sum_{A \in \agentset_{2,i}} \frac{B_A}{2},
\end{align*}
where we again used $ 2|\agentset_{2,i}| \leq 2 \Delta$. We also have $0 \leq B - 3 d_1-B_A$ for all $A \in \agentset_{1,i}$ by 
Lemma~\ref{lem-properties-lower-bound-instance} and obtain
\begin{align*}
|T_i|
& \leq  \frac{d_1 + d_2-B_{A_1}}{2} + 4 \Delta + \sum_{A \in \agentset_{1,i}\setminus \{A_1\}} \frac{B-3d_1-B_A}{2} + \sum_{A \in \agentset_{2,i}} 
\frac{B_A}{2}  \\
& =  \frac{d_1 + d_2}{2} + 4 \Delta + (|\agentset_{1,i}|-1) \cdot \frac{B-3d_1}{2}  - \sum_{A \in \agentset_{1,i}} \frac{B_A}{2} + \sum_{A \in 
\agentset_{2,i}} \frac{B_A}{2} \\
& =  \frac{B + d_2}{2} - d_1 + 4 \Delta + (|\agentset_{1,i}|-2) \cdot \frac{B-3d_1}{2}  - \sum_{A \in \agentset_{1,i}} \frac{B_A}{2} + \sum_{A \in 
\agentset_{2,i}} \frac{B_A}{2}.
\end{align*}
 
\item The bound follows directly from the fact that $B_A=0$ for some $A \in \agentset_{1,i}$ and
 $B_A \leq B - 3 d_1$ for all $A \in \agentset_{1,i}$ by Lemma~\ref{lem-properties-lower-bound-instance}.
 
\item In order to show the bound~\eqref{bound-m2}, we proceed along the following key claims:
\begin{enumerate}
\item The bound~\eqref{bound-m2} follows, if the set of agents $\agentset_{1,i} \setminus \{A_1\}$ together visit at least $(B-3 d_1)/2$ distinct vertices in $T_i$ or if there is an agent in $\agentset_{1,i} \setminus \{A_1\}$  that does not visit another subtree.\label{subclaim1-lemma}
\item The bound~\eqref{bound-m2} holds if Case~2b occurs.
\item For Case~2c, the agents in  $\left( \agentset_{i,1} \setminus \{A_1\} \right) \cup \agentset_{i,2}$ need to visit at least $(B-3 d_1)+ \sum_{A \in \agentset_{i,2}} (B_A/2 +2)$ vertices in $T_i$ for $T_i$ to be completely explored. Some of these vertices  may have already been explored by agent~$A_1$. \label{subclaim3-lemma}
\item Let $V_1$ be the set of vertices visited by~$A_1$. 
Further let $e_2$ be the number of vertices explored by the agents in~$\agentset_{i,2}$ that are not contained in~$V_1$ and $n_2$ be the total number of vertices visited by the agents in~$\agentset_{i,2}$ that are contained in~$V_1$. \label{subclaim4-lemma}
Then it holds that  $e_2 + n_2 /2 \leq \sum_{A \in \agentset_{i,2 }} \left(B_A/2 +2 \right)$. 
\item  The claims~\ref{subclaim3-lemma} and \ref{subclaim4-lemma} yield the bound~\eqref{bound-m2} if Case~2c occurs.\label{subclaim5-lemma}
\end{enumerate}

We now show each of the above claims.
\begin{enumerate}
\item 
  By Statement~\ref{lem-properties-lower-bound-instance-2} of Lemma~\ref{lem-properties-lower-bound-instance}, we know that
  $A_1$ cannot visit another subtree, i.e., $B_{A_1} = 0$, as Case~2b or Case~2c occurs when $A_2$ enters $T_i$.  If there
  exists another agent $A' \in \agentset_{1,i}$ such that $B_{A'} =
  0$, then the claim follows directly from the fact that $B_A \leq B
  - 3 d_1$ for all $A \in \agentset_{1,i} \setminus \{A_1,A'\}$ by
  Lemma~\ref{lem-properties-lower-bound-instance}. So assume 
  that for every $A\in \agentset_{1,i}\setminus \{A_1\}$, $B_A >
  0$ holds, i.e., every agent in $\agentset_{1,i}\setminus \{A_1\}$ visits two subtrees and the agents in $\agentset_{1,i}\setminus \{A_1\}$ together visit at least $(B-3 d_1)/2$ distinct vertices in $T_i$.     
As every agent $A$ in  $\agentset_{1,i}\setminus \{A_1\}$ visits a distinct subtree after $T_i$, $A$ traverses every edge in $T_i$ an even number of times. Thus at least $B-3 d_1$ energy is needed to visit $(B-3 d_1)/2$ distinct vertices. But then we already have
\begin{align*}
\sum_{A \in \agentset_{1,i} \setminus \{A_1\}} B_A \leq (|\agentset_{1,i}|-1)\cdot (B-3 d_1),
\end{align*}
as every agents spends an additional $3 d_1 $ energy to first reach $T_i$ and then the second subtree. This implies~\eqref{bound-m2}.

\item

The budget of $T_i$ is increased by $(B+d_2)/2-d_1+2\Delta$ when $A_1$ enters $T_i$,
  but this is also the maximum number of vertices that $A_1$ can
  explore by Lemma~\ref{lem-properties-lower-bound-instance}. Similarly, for every agent $A \in \agentset_{2,i}$ the
  budget is increased by $B_A/2+2$ and the agent can also explore at
  most $B_A/2+2$ vertices by
  Lemma~\ref{lem-properties-lower-bound-instance}. Note that when
  $A_2$ enters $T_i$, the adaptive tree rooted at $\smash{v_i^{(1)}}$ becomes
  passive, and thus agents not entering $\smash{v_i^{(2)}}$ can collectively
  explore at most $\Delta$ vertices after $A_2$ entered $T_i$. We
  claim that if no agent from $\agentset_{1,i}\setminus\{A_1\}$ enters
  $\smash{v_i^{(2)}}$, then $T_i$ cannot be explored.  Indeed, there are
  $\Delta$ paths starting from $\smash{v_i^{(1)}}$ and $\Delta$ paths starting
  from $\smash{v_i^{(2)}}$.
  When the budget $N_i$ is depleted, the agents must have explored $N_i$
  vertices that are not leaves, and consequently, $|T_i| \geq N_i +
  2\Delta$. Since the agents from $\agentset_{2,i}\cup\{A_1\}$ can
  explore at most $N_i - 2$ vertices, the agents from
  $\agentset_{1,i}\setminus\{A_1\}$ have to explore at least $2\Delta
  + 2$
  vertices in $T_i$. Consequently, at least one agent $A'$ from 
  $\agentset_{1,i}\setminus\{A_1\}$ has to visit $\smash{v_i^{(2)}}$ and thus
  $B_{A'} = 0$ as $d_1 + 2 d_2 \geq B$ by~\eqref{assumptions-parameters}. By Claim~\ref{subclaim1-lemma}, this yields~\eqref{bound-m2}.

\item 
As Case~2c occurs when $A_2$ enters $T_i$,  
agent $A_1$ has not enough energy to reach a vertex at depth $d_2$ via an unexplored vertex. We first show that then $A_1$ never visits a vertex at depth $d_2+1$ (it is clear by assumption that $A_1$ never explores a vertex at depth $d_2$ or below, but $A_1$ could still visit a vertex at depth $d_2+1$ on a path that was explored by another agent). If any agent $A$ from  $\agentset_{i,2}$ explores a vertex $v$ at depth $d_2$ in $T_i$, then it must have spend at least $2d_1$ energy to reach the tree it visited before $T_i$ and then come back to the root and another $d_2$ energy to reach $v$. We have $B - 2 d_1 - d_2 \leq d_2-d_1$ as $d_1 + 2d_2 \geq B$ by~\eqref{assumptions-parameters}. Thus $A$ has at most $d_2-d_1$ energy left when it visits $v$ at depth $d_2$ and the path of $A$ ends by Case~3 in the construction of the lower bound. Therefore, $A_1$ cannot reach any vertex at depth $d_2+1$ on a path that was explored by an agent from $\agentset_{i,2}$ as this path ends at depth $d_2$ at the latest.
Agent $A_1$ also cannot visit a vertex at depth $d_2+1$ that was explored by
an agent in $\left( \agentset_{i,1} \setminus \{A_1\} \right)$ as this vertex would be unexplored at the time $A_2$ enters $T_i$ and we assume that at this point $A_1$ cannot reach an unexplored vertex at depth $d_2$.

This means that $A_1$ never visits any vertex at depth $d_2+1$ and can therefore only completely explore one path below $\smash{v_i^{(1)}}$ containing at most $d_2-d_1+1$ vertices. All other vertices visited by $A_1$ that are not on that path have to be visited by other agents since otherwise there is an unexplored vertex at the end of that path.
For $T_i$ to be completely explored, the budget $N_i$ must be completely depleted as otherwise the adaptive tree below $\smash{v_i^{(1)}}$ remains active and there is an unexplored vertex in $T_i$ by Statement~\ref{lem-properties-lower-bound-instance-6} of Lemma~\ref{lem-properties-lower-bound-instance}. Thus all
$N_i$ vertices, except for at most $d_2-d_1+1$, need to be visited by the agents
in $\left( \agentset_{i,1} \setminus \{A_1\} \right) \cup \agentset_{i,2}$ for $T_i$ to be completely explored. We have 
\begin{align}
N_i - \left(d_2-d_1+1 \right)   \geq \frac{B-d_2}{2}  + \sum_{A \in \agentset_{i,2}}\left( \frac{B_A}{2} +2\right). \label{lb-num-vertices-visited-by-A1-A2}
\end{align}
Using, $d_1 + 2 d_2 \geq B$ and $d_2 \leq  5/3 \cdot  d_1$ by \eqref{assumptions-parameters}, we obtain
\begin{align*}
2 B - 6 d_1 \leq (d_1 + 2 d_2) + B - 6 d_1 = 3d_2  - 5 d_1 + (B - d_2) \leq B- d_2.
\end{align*}
This implies $B- 3d_1 \leq (B- d_2)/2$ and together with~\eqref{lb-num-vertices-visited-by-A1-A2} this yields the claim.

\item 
For an agent $A\in \agentset_{i,2}$, let $e_A$ be the number of vertices in $T_i$ that are explored by $A$ and not visited by $A_1$. Moreover, let  $n_{A}$ be the number of moves performed by agent~$A$  in $T_i$ increasing the distance from $A$ to $\smash{v_i^{(1)}}$ while visiting a new distinct vertex in $V_1$. We show that $e_A + n_A/2 \leq B_A/2 +2$. 
The claim then follows by using  $n_2= \sum_{A \in  \agentset_{i,2}} n_{A}$ and $e_2= \sum_{A \in  \agentset_{i,2}} e_{A}$.

Consider the last time an agent $A \in  \agentset_{i,2}$ visits a vertex $v$ at depth $d$ and exactly has enough energy to move to  $\smash{v_i^{(1)}}$ (as $B$ and $d_1$ are even, this will happen at some point). Note that $A$ cannot reach any other path below $\smash{v_i^{(1)}}$ and that it can explore at most one vertex as any unexplored vertex that $A$ visits will have no further neighbor. 

First, assume $v$ is explored by $A$. By Case~3 in the construction of the lower bound, the current path ends and $v$ is a vertex without further neighbors. We can now assume that $A$ returns to $\smash{v_i^{(1)}}$, as this does not change $e_A$ or $n_A$. Then $A$ has traversed every edge in $T_i$ an even number of times and we have $e_A + n_A \leq B_A/2 +1$ and thus in particular,  $e_A + n_A/2 \leq B_A/2 +1$ as $n_A\geq 0$.

Next, assume that $v$ is not explored by $A$ and also not visited by $A_1$. If $A$ would return to $\smash{v_i^{(1)}}$, then we can again argue that $A$ traverses every edge an even number of times and obtain $e_A + n_A \leq B_A/2$ because now we even know that the edge traversal to $v$ was neither an exploration move nor is $v$ contained in $V_1$. On the other hand, if $A$ does not return to $\smash{v_i^{(1)}}$ from $v$ then it cannot visit any new vertex in $V_1$ as $A_1$ never visits $v$ and therefore also no vertex below $v$. Moreover, $A$ can explore at most one additional vertex because then the current path will end immediately. Overall, we therefore again obtain $e_A + n_A \leq B_A/2 +1$, which yields  $e_A + n_A/2 \leq B_A/2 +1$.

Finally, assume that $v$ is not explored by $A$ but visited by $A_1$. Let $e_A'$ be the number of vertices not visited by $A_1$ and explored by $A$ until the visit of $v$ with remaining energy $d-d_1$ and analogously let $n_{A}'$ be the number of moves performed by agent~$A$ up to that time increasing the distance from $A$ to $\smash{v_i^{(1)}}$ while visiting a new distinct vertex in $V_1$. If $A$
would return to  $\smash{v_i^{(1)}}$ with its remaining energy, it would have traversed every edge
an even number of times and we obtain $e'_A + n'_A \leq B_A/2 +1$.
After visiting $v$ agent $A$ can explore only at most one more vertex as then the path ends immediately. Thus, we have $e_A \leq e_A'+1$.
As $v$ is visited by $A_1$, all vertices between $v$ and $\smash{v_i^{(1)}}$ must also be visited by $A_1$. Hence, it holds that $n'_A \geq d- d_1$.
Moreover, after visiting $v$ agent $A$ only has $d-d_1$ energy left for visiting vertices in $V_1$
implying $n_A-n_A' \leq d-d_1$. Overall, this yields
\begin{align*}
e_A + \frac{n_A}{2} \leq e'_A + 1 + \frac{(d-d_1) + n'_A}{2} \leq e'_A + 1 + \frac{2 n'_A}{2} \leq \frac{B_A}{2}+2. 
\end{align*}

\item 
Let $n_1$ be the total number of vertices in $T_i$ visited by the agents
in $\agentset_{i,1} \setminus \{A_1\}$. We assume $n_1 < (B-3 d_1)/2$ as otherwise the claim follows by Claim~\ref{subclaim1-lemma}. 
First of all, we must have
$n_1+ e_2 \geq \sum_{A \in \agentset_{i,2} } \left(B_A/2 +2\right)$ as
$T_i$ contains at least $N_i+\Delta$ vertices if it is completely explored of which $\sum_{A \in \agentset_{i,2} } \left(B_A/2 +2\right)$ 
are not visited by $A_1$ by Statement~\ref{lem-properties-lower-bound-instance-4} of Lemma~\ref{lem-properties-lower-bound-instance}. Using Claim~\ref{subclaim4-lemma}, this implies
\begin{align}
(B-3 d_1)/2 > n_1 \geq \sum_{A \in \agentset_{i,2} } \left(B_A/2 +2\right)- e_2 \geq n_2/2.\label{inequal-n2}
\end{align}
By Claim~\ref{subclaim3-lemma}, we must further  have
\begin{align}
 n_1 +  n_2 + e_2 \geq B- 3 d_1  +\sum_{A \in \agentset_{i,2}} \left(B_A/2 +2\right) \label{inequal-ea-ba}
\end{align}
for the budget $N_i$ to be depleted and $T_i$ completely explored.
As we have  $\sum_{A \in \agentset_{i,2} }  \left(B_A/2 +2\right) \geq n_2/2 + e_2$ by Claim~\ref{subclaim4-lemma}, we obtain $n_1 +  n_2/2 \geq B- 3 d_1$ from \eqref{inequal-ea-ba}. But this implies $n_1 \geq  (B- 3 d_1)/2$ as $n_2/2< (B-3 d_1)/2$ by~\eqref{inequal-n2}, which is a contradiction.

\end{enumerate}
\end{enumerate} 
 \end{proof}
 
\theoremLBcompetitiveRatio*

\begin{proof}
Let $\alg$ be an online exploration algorithm and let $T$ be the tree defined above, which
 depends on $\alg$ and the parameters $l, d_1, d_2$ and $B$. Assume $t$ of the $l$ subtrees 
 $T_1,T_2 \ldots, T_l$ are completely explored and for $j \in \{1,2,3\}$ let $k_j:=| \bigcup_{i \in M_j } \agentset_{1,i}|$. 
  
  We have $|\alg|  \leq   l \cdot  d_1 + \sum_{i=1}^l |T_i| $, 
  as there  are $l$ paths with $d_1$ edges each connecting the root $r$ to every subtree. We now apply 
Inequality~\eqref{bound-ti-general} from Lemma~\ref{lem-ineq-lower-bound} for all subtrees $T_i$ with $i \in M_1 \cup M_2$ and 
Inequality~\eqref{bound-ti-m0} for all subtrees $T_i$ with $i \in M_0$ and additionally use that $\bigcup_{i=1}^l\agentset_{1,i} \supseteq 
\bigcup_{i=1}^l \agentset_{2,i}$. This yields
\begin{align*}
|\alg| & \leq   l \cdot  d_1 + \sum_{i=1}^l |T_i| \\
& \leq l \cdot d_1 +
\sum_{i \in M_1 \cup M_2} 
 \left( \frac{B+d_2}{2} - d_1 + 6 \Delta +   \sum_{A \in \agentset_{2,i}} \frac{B_A}{2} \right) \\
& \quad + \sum_{i \in M_0} 
 \left( \vphantom{\sum_{A \in \agentset_{2,i}}} \frac{B+d_2}{2} - d_1 +  4 \Delta + (|\agentset_{1,i}| -2) \cdot \frac{B- 3 d_1}{2} \right. \left. +  \sum_{A \in \agentset_{2,i}} \frac{B_A}{2} - \sum_{A \in \agentset_{1,i}} \frac{B_A}{2} \right) \\ 
 & \leq l \cdot \left(\frac{B+d_2}{2} + 6 \Delta \right) + \sum_{i=1}^l 
  \sum_{A \in \agentset_{2,i}} \frac{B_A}{2}  - \sum_{i\in M_0}  \sum_{A \in \agentset_{1,i}} \frac{B_A}{2} \\  & \quad 
  + \sum_{i \in M_0} 
      (|\agentset_{1,i}| -2) \cdot \frac{B- 3 d_1}{2} \\ 
 & \leq l \cdot \left( \frac{B+d_2}{2} + 6 \Delta   \right) + (k_0 -2|M_0|) \cdot \frac{B- 3 d_1}{2} + \frac{1}{2} \sum_{i \in M_1 \cup M_2} 
    \sum_{A \in \agentset_{1,i}} B_A. 
\end{align*}  
Now we can apply the Inequalities~\eqref{bound-m1} and \eqref{bound-m2}. We further use $k_0+k_1 +k_2 \leq k =2l -1$, $|M_0| + |M_1|+ |M_2|=l$, $t\leq 
|M_0| +|M_2|$ and obtain
\begin{align*}
|\alg| & \leq l \left( \frac{B+d_2}{2} + 6 \Delta \right) +   (k_0 -2|M_0|) \cdot \frac{B- 3 d_1}{2}  \\
 & \quad +
\frac{1}{2} \sum_{i \in M_1} \sum_{A \in \agentset_{1,i}} 
 (|\agentset_{1,i}|-1)\cdot (B-3 d_1)\\
 & \quad  +
 \frac{1}{2} \sum_{i \in M_2} \sum_{A \in \agentset_{1,i}}  (|\agentset_{1,i}|-2)\cdot (B-3 d_1) \\
& \leq  l \left( \frac{B+d_2}{2} + 6 \Delta \right) + (k_0+k_1+k_2-2 |M_0|- |M_1| - 2 |M_2|) \frac{B- 3 d_1}{2} \\
& \leq  l \left( \frac{B+d_2}{2} + 6 \Delta\right)  + (l-1 - t) \frac{B- 3 d_1}{2}. 
\end{align*}

Next, we will give a lower bound on the number of vertices explored by an optimal offline algorithm \opt.
As there are $2l-1$ agents and $l$ subtrees, there has to be a subtree $T_i$ with $|\agentset_{1,i}|\leq 1$. Without loss of generality let this subtree 
be $T_1$. By Lemma~\ref{lem-properties-lower-bound-instance}
the subtree $T_1$ then has an unexplored vertex $u_1$ at depth at most $d_1 + \Delta$  and, in particular, is not completely explored, implying $t<l$.

For every subtree $T_i$ that is not completely explored, let $u_i$ be
an unexplored vertex in this tree. We can just assume that every $u_i$
has degree $2l$ and $2l-1$ distinct paths of length $B$ connected to
it. The optimal offline algorithm \opt can then send $l-t$ agents each
to one of the unexplored leaves $u_i$ and then down one of the $2l-1$
distinct paths. These agents in total explore $(l-t) \cdot B$
vertices.  All other $l-1+t$ agents are send to the unexplored vertex
$u_1$ in $T_1$ and then each down one path which is not taken by any
other agent. These agents in total explore at least $(l-1+t) \cdot (B
- d_1 - \Delta) $ vertices. Overall, this yields
\begin{align*}
|\opt|  & \geq  (l-t) \cdot  B + (l-1+t) \cdot (B - d_1 - \Delta) \\
& = (2l-1) \cdot B + (l-1+t) \cdot (-d_1-\Delta). 
\end{align*}
For the competitive ratio, we hence obtain
\begin{align*}
\frac{|\opt|}{|\alg|} \geq \min_{t \in \{0,\ldots, l-1\}} \frac{ (4l-2) \cdot B + (2l-2+2t) \cdot (-d_1-\Delta) }{
l \cdot \left( B+d_2 + 12 \Delta\right)  + (l-1 - t) (B- 3 d_1) }.
\end{align*}
In order to maximize the term on the right-hand side, we want to choose $d_2$ as small as possible. Because of the initial assumptions on the parameters in \eqref{assumptions-parameters}, we must satisfy $2d_2+d_1 \geq B$. We can therefore choose $d_2=(B-d_1)/2$ and get
\begin{align*}
\frac{|\opt|}{|\alg|} \geq \min_{t \in \{0,\ldots, l-1\}} \frac{ (8l-4) \cdot B + (4l-4+4t) \cdot (-d_1-\Delta) }{
l \cdot  \left( 3 B- d_1 + 24 \Delta\right)  + (2 l- 2 - 2 t) (B- 3 d_1) }.
\end{align*}
Note that since we assumed $d_2 \leq 5d_1/3$, we need to have that $B
\leq 13d_1/3$, i.e, $d_1 \geq 3B/13$. We also need to satisfy $3 d_1 < B$ by~\eqref{assumptions-parameters} or equivalently $d_1 < B/3$. 

We now  consider an infinite sequence of instances with the following parameters: For every $i \in \N$, let the energy $B$ of the agents be 
$B^{(i)}:=2^{2 i}$, the parameter $l$ be $l^{(i)}:=2^i$ and the depth $d_1$ be $d^{(i)}_1:=b_1 \cdot B^{(i)}$ for some $b_1 \in (3/13, 1/3)$. Note that 
$d^{(i)}_1$ then satisfies $3 d^{(i)}_1 <  B^{(i)} < 13 d^{(i)}_1 / 3$ as required by our initial assumptions on the parameters. Furthermore, we have
\begin{align*}
\frac{\Delta^{(i)}}{B^{(i)}} = \frac{\ceil{\sqrt{ 2 l^{(i)} \cdot   B^{(i)} }} + 2 l^{(i)} }{B^{(i)}}
\xrightarrow{i \to \infty} 0.
\end{align*}  
By dividing all terms in the numerator and denominator by  $l^{(i)} \cdot   B^{(i)}$ and using the property above, we can compute
\begin{align*}
\frac{|\opt|}{|\alg|} & \geq \min_{t \in \{0,\ldots, l^{(i)}-1\}} \frac{ (8l^{(i)}-4) \cdot B^{(i)} + (4l^{(i)}-4+4t) \cdot (-d^{(i)}_1-\Delta^{(i)}) }{
l^{(i)} \cdot  \left( 3 B^{(i)}- d^{(i)}_1 + 24 \Delta^{(i)}\right)  + (2 l^{(i)}- 2 - 2 t) (B^{(i)}- 3 d^{(i)}_1) } \\
& \xrightarrow{ i \to \infty} \inf_{t \in [0,1)} \frac{8 - 4 b_1 - 4 b_1 \cdot t}{3 - b_1 + 2 - 6 b_1 - 2 t + 6 t \cdot b_1}.
\end{align*}
We still have the freedom to choose $b_1 \in (3/13,1/3)$ to maximize the term on the right-hand side, so we even have
\begin{align*}
\frac{|\opt|}{|\alg|} & \geq \sup_{b_1 \in (3/13,1/3)} \inf_{t \in [0,1)} \frac{8 - 4 b_1 - 4 b_1 \cdot t}{5 - 7 b_1 - 2 t + 6 t \cdot b_1}.
\end{align*}
By standard calculus, we obtain that $b_1= \frac{-3 \sqrt{17}+19}{26} \approx 0.26$ maximizes the infimum and satisfies $3/13 \leq b_1 \leq 1/3$. 
Finally, we get
\begin{align*}
\frac{|\opt|}{|\alg|} & \geq \frac{5 + 3 \sqrt{17}}{8} \approx 2.17.
\end{align*} 
\end{proof}

\end{document}